\newtheorem{theorem}{Theorem}[section]
\newtheorem*{remark*}{Remark}
\newtheorem{lemma}[theorem]{Lemma}
\newtheorem{corollary}[theorem]{Corollary}
\newtheorem*{definition*}{Definition}
\newtheorem{definition}[theorem]{Definition}
\newcommand{\F}{\ensuremath{\mathbb{F}}}
\newcommand{\R}{\ensuremath{\mathbb{R}}}
\newcommand{\Z}{\ensuremath{\mathbb{Z}}}
\newcommand{\lat}{\mathcal{L}}
\newcommand{\eps}{\varepsilon}
\renewcommand{\epsilon}{\varepsilon}
\newcommand{\poly}{\mathrm{poly}}
\renewcommand{\vec}[1]{\ensuremath{\boldsymbol{#1}}}
\DeclarePairedDelimiter\floor{\lfloor}{\rfloor}
\DeclarePairedDelimiter\ceil{\lceil}{\rceil}
\newcommand{\ksum}[1]{#1\text{-}\mathrm{SUM}}
\def\kSUM{\ksum{k}}
\newcommand{\kSUMG}[3]{#1\text{-}\mathrm{SUM}({#2},{#3})}
\begin{document}
\title{On the Hardness of Average-case $k$-SUM}
\author{
    Zvika Brakerski\thanks{Supported by the Binational Science Foundation (Grant No.\ 2016726), and by the European Union Horizon 2020 Research and Innovation Program via ERC Project REACT (Grant 756482) and via Project PROMETHEUS (Grant 780701).}\\Weizmann Institute of Science\\
    \texttt{zvika.brakerski@weizmann.ac.il} \and
		Noah Stephens-Davidowitz\footnotemark[2]\\Cornell University\\
		\texttt{noahsd@gmail.com} \and
    Vinod Vaikuntanathan\thanks{Supported in part by NSF Grants CNS-1350619, CNS-1414119 and CNS-1718161, Microsoft Faculty Fellowship and an MIT/IBM grant.}\\ MIT\\
    \texttt{vinodv@csail.mit.edu}
}

\maketitle

\begin{abstract}
In this work, we show the first worst-case to average-case reduction for the classical $k$-SUM problem. A $k$-SUM instance is a collection of $m$ integers, and the goal of the $k$-SUM problem is to find a subset of $k$ elements that sums to $0$. In the average-case version, the $m$ elements are chosen uniformly at random from some interval $[-u,u]$.

We consider the \emph{total} setting where $m$ is sufficiently large (with respect to $u$ and $k$), so that we are guaranteed (with high probability) that solutions must exist. Much of the appeal of $k$-SUM, in particular connections to problems in computational geometry, extends to the total setting.

The best known algorithm in the average-case total setting is due to Wagner (following the approach of Blum-Kalai-Wasserman), and achieves a run-time of $u^{O(1/\log k)}$. This beats the known (conditional) lower bounds for worst-case $\kSUM$, raising the natural question of whether it can be improved even further.  However, in this work, we show a matching {\em average-case lower-bound}, by showing a reduction from \emph{worst-case lattice problems}, thus introducing a new family of techniques into the field of fine-grained complexity. In particular, we show that any algorithm solving average-case $k$-SUM on $m$ elements in time $u^{o(1/\log k)}$ will give a super-polynomial improvement in the complexity of algorithms for lattice problems.
\end{abstract}

\thispagestyle{empty}
\newpage
\tableofcontents
\pagenumbering{roman}
\newpage

\pagenumbering{arabic}

\section{Introduction}
\label{sec:intro}

The $k$-SUM problem is a parameterized version of the classical subset sum problem. 
Given a collection of $m$ integers $a_1,\ldots,a_m$, the $k$-SUM problem asks if there is some subset of cardinality $k$ that sums to zero.\footnote{This is the homogeneous version of $k$-SUM. One could also define the inhomogeneous version where the instance consists also of a target integer $t$, and the goal is to produce a subset of $k$ elements that sums to $t$. In the worst-case world, the two versions are equivalent.}
This problem (especially for $k=3$, but more generally for arbitrary constant $k$)
has been influential in computational geometry,
where reductions from $\kSUM$ have been used to show the conditional
hardness of a large class of problems~\cite{GOClassProblems95,GOClassProblems12}.
More generally it has been used in computational complexity, where it has formed
the basis for several fine-grained hardness results~\cite{P10,AV14,VW13,KPP16}.  We refer the reader to the extensive
survey of Vassilevska-Williams~\cite{WilFinegrainedQuestions18} for an exposition of this line of work. The $k$-SUM problem has also been extensively studied in the cryptanalysis community (see, e.g., \cite{Wagner02,BKW03,BCJ11}). 

We know two very different algorithms for $k$-SUM: a meet-in-the-middle algorithm that achieves run-time $O(m^{\lceil k/2\rceil})$~\cite{HS74}, and dynamic programming or FFT-based algorithms that achieve run-time $\widetilde{O}(um)$~\cite{Bellman}  where $u$ is the largest absolute value of the integers $a_i$ (A sequence of recent works~\cite{KX17,Bringmann17,ABJTW19,JW19}  improve the latter to $\widetilde{O}(u+m)$). Note that the latter algorithms outperform the former when $u \ll m^{\lceil k/2\rceil}$, in what is sometimes called the {\em dense regime} of parameters, a point that we will come back to shortly.

In terms of hardness results for $k$-SUM, the work of P\v{a}tra\c{s}cu and Williams~\cite{PW10} shows that an algorithm that solves the problem in time $m^{o(k)}$ for all $m$ will give us better algorithms for
SAT, in particular refuting the exponential time hypothesis (ETH). The recent work of Abboud, Bringmann, Hermelin and Shabtay~\cite{ABHS19} shows that a $u^{1-\epsilon}$-time algorithm (for any constant $\epsilon>0$) would refute the strong exponential-time hypothesis (SETH).  So, we know that the two algorithms described above are essentially optimal, at least in the worst case.

\paragraph{Average-case Hardness.} 
The focus of this work is the natural average-case version of $k$-SUM where the problem instance $a_1,\ldots,a_m$ is chosen independently and uniformly at random from an interval $[-u,u]$. We call this the {\em average-case $\kSUM$ problem}. In this setting, \emph{deciding} whether a $\kSUM$ solution exists is in many cases trivial. In particular, if ${m \choose k} \ll u$ then a union bound argument shows that the probability of a solution existing approaches $0$. We refer to this as the \emph{sparse} regime of the problem. In contrast, if ${m \choose k}$ is sufficiently larger than $u$, then a hashing argument guarantees the existence of many solutions, with high probability over the instance. As already mentioned above, we refer to this as the \emph{dense} regime.

Notwithstanding this triviality, we notice that in the dense regime one could still consider the {\em search} problem of \emph{finding} a $\kSUM$ solution. The search problem seems to retain its hardness even in the dense setting and is the focus of our work. Since we consider the search version of the problem, we also refer to the dense regime as the \emph{total} regime, as the associated search problem has a solution with high probability.

The average-case total problem is {\em not} quite as hard as the worst-case version (at least assuming SETH), since (slight variants of) Wagner's generalized birthday algorithm~\cite{Wagner02} and the Blum-Kalai-Wasserman algorithm~\cite{BKW03} show how to solve this problem in time $u^{O(1/\log k)}$. This contrasts with the $u^{1-\eps}$ lower bound of~\cite{ABHS19} in the worst case. (The BKW/Wagner algorithm was originally stated in a slightly different setting, so we restate it in Section~\ref{sec:BKW}.) This leaves the question of \emph{how much easier} the average-case is compared to the worst-case. Given that the lower-bounds from the worst-case setting are not a barrier here, it is a-priori unclear what is the best running time in this setting. Can we improve on \cite{BKW03,Wagner02}?

\subsection{Our Results}

In this work we characterize the hardness of average-case $\kSUM$ in the total regime by presenting a (conditional) lower bound that matches the $u^{O(1/\log k)}$ upper bound described above, up to the hidden constant in the exponent. 

In more detail, our main result shows that average-case $\kSUM$ is indeed hard to solve, under the assumption that \emph{worst-case} lattice problems are hard to approximate. We thus introduce a new family of techniques into the study of the hardness of the $\kSUM$ problem.  Concretely, this lower-bound shows that a $u^{o(1/\log k)}$-time algorithm for average-case $k$-SUM (in the dense regime) implies a $2^{o(n)}$-time $n^{1+\epsilon}$-approximation algorithm for the shortest independent vectors problem (SIVP) over an $n$-dimensional lattice, a lattice problem for which the best known algorithms run in time $2^{\Omega(n)}$~\cite{ALNSSlideReduction20,ACNoteConcrete19}.   Improving this state of affairs, in particular finding a $2^{o(n)}$-time algorithm for SIVP, would have major consequences in lattice-based cryptography both in theory and in practice~\cite{NIST,APSConcreteHardness15, ADTSPostquantumKey16}.

We also note in the appendix that some of the connections between $\kSUM$ and geometric problems from \cite{GOClassProblems95,GOClassProblems12} carry over to the dense setting as well. This shows an interesting (and not previously known, as far as we could find) connection between approximate short vectors in lattices, and computational geometry.

\def\vecx{\mathbf{x}}

\subsection{Our Techniques}

The starting point of our reduction is the well-known worst-case to average-case reductions in the lattice world, pioneered by Ajtai~\cite{AjtGeneratingHard96,MRWorstcaseAveragecase07,GPVTrapdoorsHard08,GINX16}. These reductions show that the approximate shortest independent vectors (SIVP) problem, a standard problem in the lattice world, is at least as hard in the worst-case as a certain problem called short integer solutions (SIS) on the average. The definition of lattices and the approximate shortest vector problem is not crucial for the current discussion, however we note that the best algorithms on $n$-dimensional lattices that compute any $\poly(n)$-approximation to SIVP run in time $2^{\Omega(n)}$. 
(We refer the curious reader to, e.g., \cite{MRWorstcaseAveragecase07,PeiDecadeLattice16, ALNSSlideReduction20}, Section~\ref{sec:lattices}, and the references therein for more background on lattices and lattice problems.)

In the (one-dimensional) {\em average-case} SIS problem with parameters $m,Q$ and $\beta$, one is given random integers $a_1,\ldots,a_m \in \Z_Q$ and the goal is to find a {\em non-zero} integer linear combination $\vecx = (x_1,\ldots,x_m) \in \Z^m$ such that $\sum_{i\in [m]} a_ix_i = 0 \pmod{Q}$ and $\vecx$ is short, namely $||\vecx||_1 \leq \beta$. Thus, this is exactly the modular subset sum problem (i.e.\ subset sum over the group $\Z_Q$), except with weights larger than $1$. The parameters of the problem live in the dense/total regime where such solutions are guaranteed to exist with high probability. 
The worst-case to average-case reductions state that an average-case SIS solver for a sufficently large $Q$, namely $Q = (\beta n)^{\Omega(n)}$, gives us an $\widetilde{O}(\sqrt{n\log m}\cdot \beta)$-approximate algorithm for SIVP. (We refer the reader to Theorem~\ref{thm:GINX} for a more precise statement.)

Our main technical contribution is an average-case to average-case reduction from the SIS problem to the $k$-SUM problem. We show this by exhibiting a reduction from SIS to modular $k$-SUM (i.e.\ $\kSUM$ over the group $\Z_Q$), and one from modular $k$-SUM to $k$-SUM. The latter is easy to see: indeed, if you have a $k$-subset that sums to $0$, it also sums to $0 \pmod{Q}$ for any $Q$.
Henceforth in this discussion, when we say $k$-SUM, we will mean modular $k$-SUM.

To reduce from SIS with parameters $m,Q,\beta$ to modular $k$-SUM on $m$ numbers over $\Z_Q$, we start with a simple, seemingly {\em trivial}, idea. SIS and $k$-SUM are so similar that perhaps one could simply run the $k$-SUM algorithm on the SIS instance. Unfortunately, this fails. For a $k$-SUM solution to exist, $m$ has to be at least roughly $Q^{1/k} = n^{\Omega(n/k)}$. But, this could only possibly give us an approximate-SIVP algorithm that runs in time $n^{\Omega(n/k)}$ (where we are most interested in constant $k$), since the reduction from SIVP has to at least write down the $m$ samples. This is   a meaningless outcome since, 
as we discussed before, there are algorithms for approximate SIVP that run in time $2^{O(n)}$.

Fortunately, ideas from the BKW algorithm~\cite{BKW03} for subset sum (and the closely related algorithm from \cite{Wagner02} for $k$-SUM) come to our rescue. We will start with SIS modulo $Q = q^L$ for some $q$ and $L$ that we will choose later.  (\cite{GINX16} showed that worst-case to average-case reductions work for any sufficiently large $Q$, include $Q = q^L$.)

The BKW algorithm iteratively produces subsets that sum to $0$ modulo $q^i$ for $i=1,\ldots,L$, finally producing SIS solutions modulo $Q$. To begin with, observe that for a $k$-subset-sum to exist modulo $q$, it suffices that $m \approx q^{1/k} \ll Q^{1/k}$, potentially getting us out of the conundrum from before. In particular, we will set $q \approx 2^{n\log k}$, $L \approx \log n/\log k$, therefore $Q = q^L \approx n^n$ as needed. We will also set  $m \approx q^{\epsilon/\log k} \approx 2^{\epsilon n}$ for a large enough $\epsilon$ so that solutions exist (since $m^k \gg q$). Furthermore, a $k$-SUM algorithm mod $q$ that performs better than BKW/Wagner, that is, runs in time $q^{o(1/\log k)} =  2^{o(n)}$, {\em is} potentially useful to us. 

With this ray of optimism, let us assume that we can run the $k$-SUM algorithm many times to get several, $m$ many, subsets $S_j$ that sum to $0$ modulo $q$.  (We will return to, and remove, this unrealistic assumption soon.) 
That is, 
$$ b_j := \sum_{i \in S_j} a_i = 0 \pmod{q}$$
The BKW/Wagner approach would then be to use the $(b_1,\ldots,b_m)$ to generate $(c_1,\ldots,c_m)$ that are $0 \pmod{q^2}$, and so on. Note that $c_i$ are a linear combination of $a_1,\ldots,a_m$ with weight $k^2$. At the end of the iterations, we will obtain at least one linear combination of $(a_1,\ldots,a_m)$ of  weight $\beta = k^L$ that sums to $0$ modulo $q^L = Q$, solving SIS. (We also need to make sure that this is a non-zero linear combination, which follows since the coefficients of all intermediate linear combinations are positive.)

This would finish the reduction, except that we need to remove our unrealistic assumption that we can use the $k$-SUM oracle to get many $k$-subsets of $(a_1,\ldots,a_m)$ that sum to $0$. For one, the assumption is unrealistic because if we feed the $k$-SUM oracle with the same $(a_1,\ldots,a_m) \pmod{q}$ twice, we will likely get the same $k$-SUM solution.
On the other hand, using a fresh random instance for every invocation of the $k$-SUM oracle will require $m$ to be too large (essentially returning to the trivial idea above). A natural idea is to observe that each $k$-SUM solution touches a very small part of the instance. Therefore, one could hope to first receive a $k$-SUM $a_{i_1}+\ldots+a_{i_k}$ from the oracle, and in the next iteration, use as input $\{a_1,\ldots, a_m\} \setminus \{a_{i_1},\ldots,a_{i_k}\}$, which is nearly as large as the original set. Unfortunately, continuing like this cannot work. The distributions of the successive instances that we feed to the oracle will no longer be uniform, and even worse, the oracle itself can choose which elements to remove from our set. A suitable malicious oracle can therefore prevent us from obtaining many $k$-SUMs in this way, even if the oracle has high success probability on uniform input.

Instead, our key idea is rather simple, namely to resort to randomization.
Given an instance $(a_1,\ldots,a_m) \in \Z_q^m$, we compute many random subset sums to generate $(a_1',\ldots,a_{m}') \in \Z_q^{m}$. That is, we choose  $k$-subsets $T'_i \subseteq [m]$ and let 
$$ a_i' = \sum_{j \in T'_i} a_j \pmod{q}$$
Since $q \gg m^{1/k}$, the leftover hash lemma~\cite{ILL89} tells us that the $a_i'$ are (statistically close to) uniformly random mod $q$. Furthermore, a $k$-subset sum of $(a_1',\ldots,a_m')$ will give us a $k^2$-subset sum of $(a_1,\ldots,a_m)$ that sums to $0 \pmod{q}$. To obtain a new subset sum of $(a_1,\ldots,a_m)$, simply run this process again choosing fresh subsets $T_i''$ to generate $(a_1'',\ldots,a_m'')$; and so on.  Eventually, this will give us a $\beta = k^{2L}$ weight solution to SIS, which is a quadratic factor worse than before, but good enough for us. (We are glossing over an important technical detail here, which is how we ensure that the resulting subset sums yield uniformly random independent elements in $q\Z/q^2\Z$.)

To finish the analysis of the reduction, observe that it calls the $k$-SUM oracle $\approx mL$ times. Assuming the oracle runs in time $q^{o(1/\log k)}$, this gives us a $2^{o(n)}$-time algorithm for approximate SIVP. The approximation factor is $\widetilde{O}(\sqrt{n\log m}\cdot \beta) \approx n^{3}$. (In the sequel, we achieve $n^{1+o(1)}$ by a careful choice of parameters.)

Interestingly, our reduction re-imagines the BKW/Wagner {\em algorithm as a reduction} from the SIS problem to $k$-SUM, where the algorithm itself is achieved (in retrospect) by plugging in the trivial algorithm for $\ksum{2}$. Of course, eventually the algorithm ends up being much simpler than the reduction (in particular, there is no need for re-randomization) since we don't need to account for ``malicious'' $\kSUM$ solvers.

\subsection{Open Problems and Future Directions}

Our work introduces the powerful toolkit of lattice problems into the field of average-case fine-grained complexity, and raises several natural directions for further research. 

First is the question of whether a result analogous to what we show holds in the sparse/planted regime as well. A possible theorem here would rule out an $m^{o(k)}$-time algorithm for $k$-SUM, assuming the hardness of lattice problems. To the best of our knowledge, in the sparse/planted regime it is not known whether the average-case problem is easier than the worst-case as in the dense regime. 

Second is the question of whether we can obtain average-case hardness of $k$-SUM for concrete small constants~$k$, perhaps even $k=3$. Our hardness result is asymptotic in $k$. 

Third is the question of whether we can show the average-case hardness of natural distributions over {\em combinatorial and computational-geometric} problems, given their connection to $k$-SUM. In this vein, we show a simple reduction to (perhaps not the most natural distribution on) the $(Q,m,d)$-plane problem in Appendix~\ref{apx:geometric}, but we believe much more can be said. More generally, now that we have shown average-case hardness of $k$-SUM, it is natural to try to reduce average-case $k$-SUM to other natural average-case problems.

\subsection{Other Related Works}

There are now quite a few works that study average-case fine-grained hardness of problems in $P$. We mention a few. First, Ball, Rosen, Sabin, and Vasudevan~\cite{BallRSV17} showed a reduction from SAT to an (average-case) variant of the orthogonal vectors problem. They demonstrated that sub-quadratic algorithms for their problem will refute SETH. 

There is also a sequence of works on the average-case hardness of counting $k$-cliques. The work of Goldreich and Rothblum~\cite{GoldreichR18,GR20} shows worst-case to average-case {\em self}-reductions for the problem of counting $k$-cliques (and other problems in $P$). Boix-Adser{\`a}, Brennan, and Bresler proved the same result for $G_{n,p}$~\cite{boix-adseraAverageCaseComplexityCounting2019}, and Hirahara and Shimizu recently showed that it is even hard to count the number of $k$-cliques with even a small probability of success~\cite{hiraharaNearlyOptimalAverageCase2020}. In contrast, our reductions go from the worst-case of one problem (SIVP) to the average-case of another ($k$-SUM). We find it a fascinating problem to show a worst-case to average-case {\em self}-reduction for $k$-SUM. 

Dalirrooyfard, Lincoln, and Vassilevska Williams recently proved fine-grained average-case hardness for many different~\cite{DLW20} problems in $P$ under various complexity-theoretic assumptions. In particular, they show fine-grained average-case hardness of counting the number of solutions of a ``factored'' variant of $k$-SUM assuming SETH. In contrast, we show fine-grained average-case hardness of the standard $k$-SUM problem under a non-standard assumption.

For the lattice expert, we remark that if one unwraps our reduction from SIVP to SIS and then to $k$-SUM, we obtain a structure that is superficially similar to \cite{MPHardnessSIS13}. However, in their setting, they do not need to reuse samples and therefore do not need the re-randomization technique, which is the key new idea in this work. 

\subsection{Organization of the Paper} 

Section~\ref{sec:variants} describes the modular variant of $k$-SUM as well as the standard $k$-SUM (over the integers), shows their totality on average, and reductions between them. For completeness, we describe the BKW/Wagner algorithm in Section~\ref{sec:BKW}. We remark that while the standard descriptions of the algorithm refer to finite groups, we need one additional trick (namely, Lemma~\ref{lem:Qimpliesu}) to obtain the algorithm over the integers. Finally, our main result, the worst-case to average-case reduction is described in Section~\ref{sec:wcac}. The connection to computational geometry is provided in Appendix~\ref{apx:geometric}.

\section{Preliminaries}

We write $\log$ for the logarithm base two and $\ln$ for the natural logarithm. We write $\binom{m}{k} := \frac{m!}{(m-k)!k!}$ for the binomial coefficient. 

\subsection{Probability}

We make little to no distinction between random variables and their associated distributions.
For two random variables $X, Y$ over some set $S$, we write $\Delta(X,Y) := \sum_{z \in S} |\Pr[X = z] - \Pr[Y=z]|$ for the statistical distance between $X$ and $Y$. For a finite set $S$, we write $U_S$ for the uniform distribution over $S$.

Recall that a set of functions $\mathcal{H} \subseteq \{h : X \to Y\}$ is a \emph{universal family of hash functions} from $X$ to $Y$ if for any distinct $x,x' \in X$
\[
    \Pr_{h \sim \mathcal{H}}[h(x) = h(x')] \leq 1/|Y|
    \; .
\]

\begin{lemma}[Leftover hash lemma]
    If $\mathcal{H}$ is a universal family of hash functions from $X$ to $Y$, then
    \[
        \Pr[\Delta(h(U_X), U_Y) \geq \beta] \leq \beta
        \; ,
    \]
    where the probability is over a random choice of $h \sim \mathcal{H}$ and $\beta := (|Y|/|X|)^{1/4}$.
\end{lemma}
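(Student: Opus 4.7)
The plan is to follow the classical two-step recipe: first bound the expected statistical distance $\mathbb{E}_h[\Delta(h(U_X), U_Y)]$ via a collision-probability argument, then apply Markov's inequality with the specific choice $\beta = (|Y|/|X|)^{1/4}$, which is exactly what makes the bound self-matching.

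For the expectation bound, I would work with collision probability. Fix $h$ and let $D_h$ denote the distribution $h(U_X)$. The standard $\ell_2$-to-$\ell_1$ inequality (Cauchy--Schwarz) gives
\[
    \Delta(D_h, U_Y) \;=\; \tfrac{1}{2}\|D_h - U_Y\|_1 \;\leq\; \tfrac{1}{2}\sqrt{|Y|}\cdot \|D_h - U_Y\|_2,
\]
and a direct expansion yields $\|D_h - U_Y\|_2^2 = \mathrm{CP}(D_h) - 1/|Y|$, where $\mathrm{CP}(D_h) = \Pr_{x,x'\sim U_X}[h(x)=h(x')]$. Taking expectation over $h \sim \mathcal{H}$ and applying Jensen's inequality to the square root,
\[
    \mathbb{E}_h\bigl[\Delta(D_h, U_Y)\bigr] \;\leq\; \tfrac{1}{2}\sqrt{|Y|}\cdot \sqrt{\mathbb{E}_h[\mathrm{CP}(D_h)] - 1/|Y|}.
\]
The universal hash property supplies the key estimate: conditioning on whether $x = x'$,
\[
    \mathbb{E}_h[\mathrm{CP}(D_h)] \;=\; \Pr_{h,x,x'}[h(x)=h(x')] \;\leq\; \tfrac{1}{|X|} + \bigl(1 - \tfrac{1}{|X|}\bigr)\cdot \tfrac{1}{|Y|} \;\leq\; \tfrac{1}{|X|} + \tfrac{1}{|Y|}.
\]
Substituting gives $\mathbb{E}_h[\Delta(D_h, U_Y)] \leq \tfrac{1}{2}\sqrt{|Y|/|X|}$.

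To finish, I apply Markov's inequality with threshold $\beta$:
\[
    \Pr_h\bigl[\Delta(D_h, U_Y) \geq \beta\bigr] \;\leq\; \frac{1}{\beta}\cdot \tfrac{1}{2}\sqrt{|Y|/|X|}.
\]
Plugging in $\beta = (|Y|/|X|)^{1/4}$, the right-hand side becomes $\tfrac{1}{2}(|Y|/|X|)^{1/4} = \beta/2 \leq \beta$, as required.

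There is no real obstacle here; the only subtlety is matching the constants so that Markov's inequality gives back exactly $\beta$ rather than some other polynomial in $|Y|/|X|$, which is the reason the exponent $1/4$ appears in the statement (one power of $1/2$ from the $\ell_2$-to-$\ell_1$ step and another $1/2$ from Markov). If one preferred a sharper in-probability bound, replacing Markov by a second-moment or Chebyshev-style argument on $\mathrm{CP}(D_h)$ would work, but it is unnecessary for the stated form.
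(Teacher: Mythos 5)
Your proof is correct and is the standard collision-probability argument for the leftover hash lemma (the paper states this lemma without proof, so there is no in-paper argument to compare against). One small point worth flagging: the paper defines $\Delta(X,Y) := \sum_z |\Pr[X=z]-\Pr[Y=z]|$ \emph{without} the factor $\tfrac12$, so under the paper's convention the $\ell_2$-to-$\ell_1$ step gives $\mathbb{E}_h[\Delta(h(U_X),U_Y)] \le \sqrt{|Y|/|X|}$ and Markov yields exactly $\beta$ rather than $\beta/2$; either way the chain of inequalities closes and the stated bound holds.
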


\begin{lemma}
    For any positive integers $Q,m$, let $\mathcal{H}$ be the family of hash functions from $\{0,1\}^m$ to $\Z_Q$ given by $h_{\vec{a}}(\vec{x}) = \langle \vec{a},\vec{x} \rangle \bmod Q$ for all $\vec{a} \in \Z_Q^{m}$. Then, $\mathcal{H}$ is a universal family of hash functions.
\end{lemma}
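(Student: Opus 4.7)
The plan is a direct calculation showing that for any two distinct inputs, the number of hash keys on which they collide is exactly $Q^{m-1}$, which is a $1/Q$ fraction of $|\Z_Q^m|$. The only feature of the $\{0,1\}$ domain that we actually need is that the difference of two distinct binary vectors has coordinates in $\{-1,0,1\}$, at least one of which is a unit mod $Q$.

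Concretely, I would fix distinct $\vec{x},\vec{x}' \in \{0,1\}^m$ and set $\vec{d} := \vec{x} - \vec{x}' \in \{-1,0,1\}^m$; by distinctness $\vec{d}$ is nonzero, so there exists an index $i$ with $d_i \in \{-1,+1\}$. The collision event $h_{\vec{a}}(\vec{x}) = h_{\vec{a}}(\vec{x}')$ is equivalent to
\[
    \langle \vec{a}, \vec{d}\rangle \;\equiv\; 0 \pmod{Q}
    \; ,
\]
i.e.\ $a_i d_i \equiv -\sum_{j \neq i} a_j d_j \pmod Q$.

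Next I would condition on the values of $a_j$ for $j \neq i$, which determines the right-hand side of the above congruence. Since $d_i = \pm 1$ is a unit in $\Z_Q$, there is a unique value of $a_i \in \Z_Q$ satisfying the equation. Hence the conditional probability over a uniform $a_i \sim \Z_Q$ is exactly $1/Q$, and marginalizing gives $\Pr_{\vec{a} \sim \Z_Q^m}[h_{\vec{a}}(\vec{x}) = h_{\vec{a}}(\vec{x}')] = 1/Q \leq 1/|\Z_Q|$, which is the definition of universality.

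There is no real obstacle here; the only mildly subtle point is noticing that we need $d_i$ to be invertible mod $Q$ (not merely nonzero), and that this holds because the input domain $\{0,1\}^m$ restricts each $d_i$ to $\{-1,0,1\}$, all of whose nonzero elements are units regardless of $Q$. The argument would fail, for instance, if we allowed inputs in $\{0,1,2\}^m$ with $Q$ even.
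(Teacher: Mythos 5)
Your proof is correct and follows essentially the same route as the paper: both isolate a coordinate where the two vectors differ, condition on the remaining coordinates of $\vec{a}$, and use the uniformity and independence of the distinguished coordinate to get collision probability exactly $1/Q$. Your explicit remark that the nonzero entries of $\vec{x}-\vec{x}'$ are units mod $Q$ is the same fact the paper uses implicitly via its normalization $x_1=1$, $y_1=0$.
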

\begin{proof}
    Let $\vec{x}, \vec{y} \in \{0,1\}^m$ be distinct vectors, and suppose without loss of generality that $x_1 =1$ and $y_1 = 0$. We write $\vec{a}' \in \Z_Q^{m-1}$ for the vector obtained by removing the first coordinate from $\vec{a}$ and ${a}_1$ for the first coordinate itself. Similarly, we write $\vec{x}', \vec{y}' \in \{0,1\}^{m - 1}$ for the vectors $\vec{x}, \vec{y}$ with their first coordinate removed. Then,
    \[
        \Pr[\langle \vec{a},\vec{x}\rangle = \langle\vec{a},\vec{y} \rangle\bmod Q] = \Pr[\langle\vec{a}', \vec{x}'\rangle + {a}_1 = \langle\vec{a}', \vec{y}'\rangle \bmod Q] = \Pr[{a}_1 = \langle\vec{a}',\vec{y}' - \vec{x}'\rangle \bmod Q] = 1/Q
        \; ,
    \]
    where the probability is over the random choice of $\vec{a} \in \Z_Q^m$. The last equality follows from the fact that ${a}_1 \in \Z_Q$ is uniformly random and independent of $\vec{a}'$.
\end{proof}

\begin{corollary}
\label{cor:LHL_A_1}
    For any positive integers $Q,m$ and any subset $X \subseteq \{0,1\}^m$,
    \[
        \Pr_{\vec{a} \sim \Z_Q^{m}}[\Delta(\langle \vec{a}, U_X \rangle \bmod Q, U_{\Z_Q}) \geq \beta] \leq \beta
        \; ,
    \]
    where $\beta := (Q/|X|)^{1/4}$.
\end{corollary}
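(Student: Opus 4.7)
The plan is essentially a one-line invocation of the previous two results, but let me be explicit about what needs to be checked. First I would observe that the restriction of a universal family to a subdomain remains universal: the defining inequality $\Pr_{h \sim \mathcal{H}}[h(x)=h(x')] \leq 1/|Y|$ only quantifies over distinct pairs $(x,x') \in X \times X$, so if it holds for all distinct pairs in $\{0,1\}^m \times \{0,1\}^m$, it holds for all distinct pairs in $X \times X$. Therefore the family $\mathcal{H} = \{h_{\vec{a}} : \vec{a} \in \Z_Q^m\}$ from the preceding lemma, viewed as a family of functions $X \to \Z_Q$, is still a universal family.

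Next I would apply the Leftover Hash Lemma directly to this restricted family, with input set $X$ and output set $Y = \Z_Q$. A uniformly random choice of $h \sim \mathcal{H}$ is precisely a uniformly random choice of $\vec{a} \sim \Z_Q^m$, and $h_{\vec{a}}(U_X) = \langle \vec{a}, U_X\rangle \bmod Q$. The LHL then yields
\[
    \Pr_{\vec{a} \sim \Z_Q^m}\bigl[\Delta(\langle \vec{a}, U_X\rangle \bmod Q,\; U_{\Z_Q}) \geq \beta\bigr] \leq \beta,
\]
with $\beta = (|Y|/|X|)^{1/4} = (Q/|X|)^{1/4}$, exactly as claimed.

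There is no real obstacle here; the only thing to double-check is that the LHL as stated in the excerpt is applied with the correct domain size. Because the bound in the LHL depends on $|X|$ (the cardinality of the input set) rather than on the ambient space $\{0,1\}^m$, restricting to an arbitrary subset $X$ genuinely tightens the statement when $|X| < 2^m$, which is exactly what the corollary exploits. So the entire proof is: (i) universality is inherited by subdomains, (ii) apply LHL with the subdomain $X$.
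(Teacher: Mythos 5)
Your proof is correct and matches the paper's intended argument: the corollary is stated without proof precisely because it follows by restricting the universal family $\{h_{\vec{a}}\}$ to the subdomain $X$ (universality is trivially inherited) and invoking the Leftover Hash Lemma with input set $X$ and output set $\Z_Q$. Nothing is missing.
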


\begin{corollary}
\label{cor:LHL_A}
    Let $\vec{a} := ({a}_1,\ldots, {a}_M) \in \Z_Q^{M}$ be sampled uniformly at random, and let $S_1,\ldots, S_{M'} \subset [M]$ be sampled independently and uniformly at random with $|S_i| = t$. Let ${c}_i := \sum_{j \in S_i} {a}_j \bmod Q$. Then, $(\vec{a}, \vec{c}) := ({a}_1,\ldots, {a}_M, {c}_1,\ldots, {c}_{M'})$ is within statistical distance $\delta$ of a uniformly random element in $\Z_Q^{M+M'}$, where
    \[
        \delta := (M'+1)\cdot Q^{1/4} \cdot \binom{M}{t}^{-1/4} \leq (M' + 1) \cdot \bigg(\frac{Qt^t}{M^t}\bigg)^{1/4}
    \; .
    \]
\end{corollary}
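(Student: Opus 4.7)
The plan is to invoke Corollary~\ref{cor:LHL_A_1} with $X := \{\vec{x} \in \{0,1\}^M : \sum_i x_i = t\}$, which has size $|X| = \binom{M}{t}$, so that sampling $S_i \sim U_X$ is the same as sampling the indicator vector of $S_i$. Then each $c_i$ is exactly $\langle \vec{a}, \vec{x}_i\rangle \bmod Q$ for an independent $\vec{x}_i \sim U_X$. Set $\beta := Q^{1/4}\binom{M}{t}^{-1/4}$.

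First I would call $\vec{a}$ \emph{good} if $\Delta(\langle \vec{a}, U_X\rangle \bmod Q,\, U_{\Z_Q}) < \beta$, and \emph{bad} otherwise. By Corollary~\ref{cor:LHL_A_1}, $\Pr_{\vec{a}}[\vec{a}\text{ is bad}] \leq \beta$. Next, conditioned on any fixed good $\vec{a}$, the random variables $c_1,\ldots,c_{M'}$ are mutually independent (because the $S_i$ are sampled independently) and each is within statistical distance $\beta$ of $U_{\Z_Q}$. A standard hybrid argument over the $M'$ coordinates then gives
\[
    \Delta\bigl(\vec{c}\mid \vec{a},\; U_{\Z_Q^{M'}}\bigr) \leq M'\beta
    \quad\text{for every good }\vec{a}.
\]

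Finally, I would combine the two bounds. Writing $(\vec{a},\vec{c})$ as the joint distribution on $\Z_Q^{M+M'}$, and using that the marginal of $\vec{a}$ is already uniform,
\[
    \Delta\bigl((\vec{a},\vec{c}),\, U_{\Z_Q^{M+M'}}\bigr) \leq \Pr[\vec{a}\text{ bad}]\cdot 1 + \Pr[\vec{a}\text{ good}]\cdot M'\beta \leq (M'+1)\beta,
\]
which is the claimed $\delta$. The second inequality in the statement follows from the elementary bound $\binom{M}{t} \geq (M/t)^t$, which yields $\binom{M}{t}^{-1/4} \leq t^{t/4}/M^{t/4}$ and hence $\delta \leq (M'+1)(Qt^t/M^t)^{1/4}$.

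\textbf{Main obstacle.} There is no serious obstacle here; the only subtlety is to notice that Corollary~\ref{cor:LHL_A_1} gives pointwise closeness of the $c_i$'s to uniform \emph{for a typical} $\vec{a}$, rather than averaged over $\vec{a}$, so one must explicitly handle the bad event separately and only pay the hybrid factor $M'$ in the good case. Everything else is bookkeeping and the simple binomial estimate.
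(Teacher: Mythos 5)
Your proof is correct and follows essentially the same route as the paper's: identify the $S_i$ with uniform elements of $X_t := \{\vec{x} \in \{0,1\}^M : \|\vec{x}\|_1 = t\}$, apply Corollary~\ref{cor:LHL_A_1} to declare $\vec{a}$ good except with probability $\beta$, and then union-bound/hybrid over the $M'$ conditionally independent samples $c_i$ for good $\vec{a}$. Your write-up is just slightly more explicit than the paper's about separating the bad event from the hybrid step and about the binomial estimate $\binom{M}{t} \geq (M/t)^t$.
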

\begin{proof}
    Let $X_t := \{\vec{x} \in \{0,1\}^M \ : \ \|\vec{x}\|_1 = t\}$, and notice that $|X_t| = \binom{M}{t}$.
    Call $\vec{a}$ good if $\Delta(\langle\vec{a},U_{X_t}\rangle \bmod Q, U_{\Z_Q}) \leq \beta := Q^{1/4} /|X_t|^{1/4}$.
    From Corollary~\ref{cor:LHL_A_1}, we see that $\vec{A}$ is good except with probability at most $\beta$.

    Finally, notice that the ${c}_i$ are distributed exactly as independent samples from $\langle\vec{a}, U_{X_t}\rangle$. Therefore, if $\vec{a}$ is good, each of the ${c}_i$ is within statistical distance $\beta$ of an independent uniform sample. The result then follows from the union bound.
\end{proof}

\subsection{Hitting probabilities}

\begin{definition}
    For $\vec{a} := (a_1,\ldots, a_M) \in \Z_Q^{M}$, $\vec{c} := (c_1,\ldots, c_{M'}) \in \Z_Q^{M'}$, $I \subset [M]$, $J \subset [M']$, and a positive integer $t$, the $t$-\emph{hitting probability} of $\vec{a}$, $\vec{c}$, $I$, and $J$ is defined as follows. For each $j \in J$, sample a uniformly random $S_j \in \binom{[M]}{t}$ with $\sum_{i \in S_j} a_i = c_j$. (If no such $S_j$ exists, then we define the hitting probability to be $1$.) The hitting probability is then
    \[
        p_{\vec{a}, \vec{c}, I, J,t} := \Pr[\exists\; j,j'\in J \text{ such that } S_j \cap I \neq \emptyset \text{ or } S_j \cap S_{j'} \neq \emptyset]
        \; .
    \]
\end{definition}

\begin{lemma}
\label{lem:hitting}
    For any positive integers $Q,M,t$ and $0 < \eps < 1$,
    \[
        \Pr_{\vec{a} \sim \Z_Q^{M}, c \sim \Z_Q}\Big[p_{\vec{a}, c,t} \geq \frac{1+\eps}{1-\eps} \cdot \frac{t}{M}\Big] \leq \frac{4Q^{1/4}}{\eps \cdot \binom{M-1}{t-1}^{1/4}}
        \; .
    \]
    where
    \[
        p_{\vec{a}, c, t} := p_{\vec{a}, c, \{1\}, \{1\},t}
        \; .
    \]
\end{lemma}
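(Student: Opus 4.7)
My plan is to count relevant subsets and use Corollary~\ref{cor:LHL_A_1} to show the counts concentrate around their natural expected values. For fixed $\vec{a}$ and $c$, let
\[
N(\vec{a},c) := \big|\{S \in \tbinom{[M]}{t} : \textstyle\sum_{i \in S} a_i = c \bmod Q\}\big|, \qquad N^{(1)}(\vec{a},c) := \big|\{S \in \tbinom{[M]}{t} : 1 \in S,\ \textstyle\sum_{i \in S} a_i = c\}\big|.
\]
Unpacking the definition, $p_{\vec{a},c,t} = N^{(1)}(\vec{a},c)/N(\vec{a},c)$ when $N(\vec{a},c) > 0$, and $p_{\vec{a},c,t} = 1$ otherwise. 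Since $\binom{M-1}{t-1}/\binom{M}{t} = t/M$, it suffices to show that with the claimed probability $N$ and $N^{(1)}$ lie within a $(1\pm\eps)$ multiplicative window of $\binom{M}{t}/Q$ and $\binom{M-1}{t-1}/Q$, respectively; then the ratio is at most $\frac{1+\eps}{1-\eps}\cdot \frac{t}{M}$.

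For $N$, apply Corollary~\ref{cor:LHL_A_1} with $X = X_t := \{\vec{x}\in\{0,1\}^M : \|\vec x\|_1 = t\}$, setting $\beta_1 := (Q/\binom{M}{t})^{1/4}$. Except with probability $\beta_1$ over $\vec{a}$, the distribution of $\langle \vec{a}, U_{X_t}\rangle \bmod Q$ is $\beta_1$-close to $U_{\Z_Q}$, which rewrites as $\sum_c |N(\vec{a},c) - \binom{M}{t}/Q| \le \binom{M}{t}\beta_1$. Averaging over uniform $c$ and applying Markov's inequality,
\[
\Pr_c\!\Big[\,\big|N(\vec{a},c) - \tbinom{M}{t}/Q\big| > \eps \cdot \tbinom{M}{t}/Q\,\Big] \le \beta_1/\eps.
\]

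For $N^{(1)}$, use the bijection $S \mapsto S\setminus\{1\}$ to write $N^{(1)}(\vec{a},c) = N'(\vec{a}', c - a_1 \bmod Q)$, where $\vec{a}' := (a_2,\ldots,a_M)$ and $N'(\vec{a}', c')$ counts $(t-1)$-subsets of $\{2,\ldots,M\}$ summing to $c'$. Crucially, if $c$ is uniform in $\Z_Q$ and independent of $a_1$, then so is $c-a_1$. Apply Corollary~\ref{cor:LHL_A_1} again with $X$ of weight $t-1$ in $\{0,1\}^{M-1}$, giving $\beta_2 := (Q/\binom{M-1}{t-1})^{1/4}$; the same Markov argument yields $\Pr_c[|N^{(1)} - \binom{M-1}{t-1}/Q| > \eps\binom{M-1}{t-1}/Q] \le \beta_2/\eps$.

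Union-bounding the two ``bad $\vec{a}$'' events and the two ``bad $c$'' events, the total failure probability is at most $\beta_1 + \beta_2 + \beta_1/\eps + \beta_2/\eps \le 4\beta_2/\eps$ (using $\beta_1 \le \beta_2$ since $\binom{M}{t}\ge \binom{M-1}{t-1}$, and $\eps < 1$), which is exactly $\frac{4 Q^{1/4}}{\eps \binom{M-1}{t-1}^{1/4}}$. On the complementary event, $p_{\vec{a},c,t} \le \frac{(1+\eps)\binom{M-1}{t-1}/Q}{(1-\eps)\binom{M}{t}/Q} = \frac{1+\eps}{1-\eps}\cdot\frac{t}{M}$. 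I expect the main obstacle to be the accounting step: Corollary~\ref{cor:LHL_A_1} only guarantees \emph{total-variation} closeness averaged over $c$, so translating this to a pointwise multiplicative bound for a random $c$ requires the two-step ``good $\vec a$'' plus Markov argument above rather than either alone, and care is needed that $c - a_1$ remains uniform and that the fractional/count conversions do not lose factors.
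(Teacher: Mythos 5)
Your proof is correct and follows essentially the same route as the paper's: your counts $N(\vec{a},c)$ and $N^{(1)}(\vec{a},c)$ are just $\binom{M}{t}$- and $\binom{M-1}{t-1}$-scaled versions of the paper's conditional probabilities $p_{\vec{a},c}$ and $p_{\vec{a},c}'$, and both arguments apply Corollary~\ref{cor:LHL_A_1} twice (to weight-$t$ and weight-$(t-1)$ subsets), convert the statistical-distance bound into a Markov bound over a uniformly random $c$, and finish with the same four-term union bound yielding the constant $4$. If anything, your write-up states the two Markov inequalities more cleanly than the paper, whose displayed versions contain minor sign/variable typos.
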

\begin{proof}
We have
\[
    p_{\vec{a},c,t} = \Pr_{\vec{x} \sim X_t}[x_1 = 1\ |\ \langle \vec{a}, \vec{x} \rangle = c \bmod Q]
    \; ,
\]
where $X_t := \{\vec{x} \in \{0,1\}^M \ : \ \|\vec{x}\|_1 = t\}$. Therefore,
\begin{align*}
    p_{\vec{a},c,t}
        &= \Pr_{\vec{x} \sim X_t}[x_1 = 1] \cdot \Pr_{\vec{x} \sim X_t}[\langle \vec{a}, \vec{x} \rangle = c \bmod Q\ | \ x_1 = 1]/\Pr_{\vec{x} \sim X_t}[\langle \vec{a}, \vec{x} \rangle = c \bmod Q]\\
        &= \frac{t}{M} \cdot \Pr_{\vec{x}' \sim X_{t-1}'}[\langle \vec{a}_{-1}, \vec{x}'\rangle  = c - a_1 \bmod Q]/\Pr_{\vec{x} \sim X_t}[\langle \vec{a}, \vec{x} \rangle = c \bmod Q]
        \; ,
\end{align*}
where $\vec{a}_{-1}$ is $\vec{a}$ with its first coordinate removed and $X_{t-1}' := \{\vec{x} \in \{0,1\}^{M-1} \ : \ \|\vec{x}\|_1 = t-1\}$.

So, let
    \[
    p_{\vec{a}, c} := \Pr_{\vec{x} \in X_t}[ \langle \vec{a}, \vec{x} \rangle = c \bmod Q ]
    \; ,
    \]
    and
    \[
        p_{\vec{a}, c}' := \Pr_{\vec{x}' \in X_{t-1}'}[ \langle \vec{a}_{-1}, \vec{x}'\rangle = c - a_1 \bmod Q ]
        \; .
    \]
    As in the proof of Corollary~\ref{cor:LHL_A}, we see that
    \begin{equation}
    \label{eq:good}
        \sum_{c \in \Z_Q} |p_{\vec{a},c}- 1/Q| = \Delta(\langle \vec{a}, U_{X_t}\rangle \bmod Q, U_{\Z_Q}) \leq Q^{1/4}/\binom{M}{t}^{1/4}
    \end{equation}
    except with probability at most $Q^{1/4}/\binom{M}{t}^{1/4}$ over $\vec{a}$. Similarly,
    \begin{equation}
    \label{eq:very_good}
        \sum_{c\in \Z_Q} |p_{\vec{a},c}'- 1/Q| = \Delta(\langle \vec{a}_{-1}, U_{X_{t-1}'\rangle } \bmod Q, U_{\Z_Q}) \leq Q^{1/4}/\binom{M-1}{t-1}^{1/4}
    \end{equation}
    except with probability at most $Q^{1/4}/\binom{M-1}{t-1}^{1/4}$ over $\vec{a}$.

    So, suppose that $\vec{a}$ satisfies Eq.~\eqref{eq:good} and Eq.~\eqref{eq:very_good}. Then, by Markov's inequality,
    \[
        \Pr_{c \in \Z_Q}[p_{\vec{a}, c} \geq (1-\eps)/Q] \leq \frac{Q^{1/4}}{\eps \cdot \binom{M}{t}^{1/4}}
    \]
    for any $0 < \eps < 1$, and similarly,
    \[
        \Pr_{c \in \Z_Q}[p_{\vec{a}, c}' \leq (1+r)/Q] \leq \frac{Q^{1/4}}{\eps \cdot \binom{M-1}{t-1}^{1/4}}
        \; .
    \]
    Therefore, for such $\vec{a}$,
    \[
        \Pr[p_{\vec{a},c,t} \geq (1+\eps)t /((1-\eps)M)] \leq \frac{2Q^{1/4}}{\eps \cdot \binom{M-1}{t-1}^{1/4}}
        \; .
    \]
    The result then follows by union bound.
\end{proof}

By repeated applications of union bound, we derive the following corollary.

\begin{corollary}
    \label{cor:hitting_prob}
    For any positive integers $Q,M,M',t, v, v'$ and $0 < \eps < 1$, let $\vec{a} \sim \Z_Q^{M}$ and $\vec{c} \sim \Z_Q^{M'}$ be sampled uniformly at random. Then,
    \[
    p_{\vec{a}, \vec{c},I,J,t} \leq (v+tv') \cdot v' \cdot \frac{1+\eps}{1-\eps} \cdot \frac{t}{M}
    \]
    for all $I \in \binom{[M]}{\leq v},\; J \in \binom{[M']}{\leq v'}$ except with probability at most
    \[
    4 M M'\frac{Q^{1/4}}{\eps \cdot \binom{M-1}{t-1}^{1/4}}
    \; .
    \]
\end{corollary}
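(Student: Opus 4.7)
My approach is a two-step union bound argument built directly on Lemma~\ref{lem:hitting}. The plan is to first show that, with high probability over $(\vec{a},\vec{c})$, the \emph{elementary} hitting probability $p_{\vec{a}, c_j, \{i\}, \{j\}, t}$ is small for \emph{every} pair $(i,j) \in [M]\times[M']$ simultaneously, and then to reduce the general quantity $p_{\vec{a},\vec{c},I,J,t}$ to these elementary ones by a second union bound.

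For the first step, I would apply Lemma~\ref{lem:hitting} once per pair $(i,j)$. The lemma is literally stated only for $i = 1$, but by the coordinate-permutation symmetry of the uniform distribution on $\vec{a}$, the same high-probability bound holds for $p_{\vec{a},c_j,\{i\},\{j\},t}$ for any $i \in [M]$; and since $c_j$ is marginally uniform on $\Z_Q$ and independent of $\vec{a}$, the hypotheses of Lemma~\ref{lem:hitting} are satisfied for each fixed $j$. A union bound over the $MM'$ pairs then yields that, except with probability at most $4MM'\cdot \frac{Q^{1/4}}{\eps\binom{M-1}{t-1}^{1/4}}$, the bound
\[
    \Pr[i \in S_j] \;\leq\; \frac{1+\eps}{1-\eps}\cdot\frac{t}{M}
\]
holds for every $(i,j)$ simultaneously.

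Now condition on this event. The bad event in the definition of $p_{\vec{a},\vec{c},I,J,t}$ splits into \emph{(a)} ``some $S_j$ meets $I$'' and \emph{(b)} ``some pair $S_j,S_{j'}$ with $j \neq j'$ intersect.'' Part (a) is immediate: $\Pr[\exists\, j \in J : S_j \cap I \neq \emptyset] \leq \sum_{j \in J}\sum_{i \in I}\Pr[i \in S_j] \leq vv' \cdot \frac{1+\eps}{1-\eps}\cdot \frac{t}{M}$. For part (b), the key observation is that, conditional on $(\vec{a},\vec{c})$, the sets $S_j$ are independent across $j$; hence, for each fixed $j \neq j'$,
\[
    \Pr[S_j \cap S_{j'} \neq \emptyset] \;\leq\; \expect_{S_{j'}}\Big[\sum_{i \in S_{j'}} \Pr[i \in S_j]\Big] \;\leq\; t\cdot \frac{1+\eps}{1-\eps}\cdot\frac{t}{M}.
\]
Summing over the at most $v'^2$ ordered pairs in $J$ contributes $tv'^2 \cdot \frac{1+\eps}{1-\eps}\cdot\frac{t}{M}$. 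Adding (a) and (b) gives $(vv' + tv'^2)\cdot \frac{1+\eps}{1-\eps}\cdot\frac{t}{M} = (v + tv')\cdot v' \cdot \frac{1+\eps}{1-\eps}\cdot\frac{t}{M}$, as required.

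The only real subtlety I anticipate is in part (b): one has to justify that the elementary bound $\Pr[i \in S_j] \leq \tfrac{1+\eps}{1-\eps}\cdot \tfrac{t}{M}$ established in step one applies \emph{pointwise} inside the inner expectation over $S_{j'}$, and that $S_j,S_{j'}$ are genuinely independent once $(\vec{a},\vec{c})$ is fixed. Both follow directly from the sampling description in the definition, after which the argument is routine bookkeeping.
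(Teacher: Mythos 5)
Your proof is correct and follows essentially the same route as the paper's: a union bound over all $MM'$ pairs $(i,j)$ to control every elementary hitting probability $p_{\vec{a},\vec{c},\{i\},\{j\},t}$ via Lemma~\ref{lem:hitting} (with the permutation-symmetry and marginal-uniformity observations you make handled implicitly there), followed by a combinatorial union bound to assemble the general case. The only difference is cosmetic: the paper absorbs the pairwise-intersection events by enlarging $I$ to $I_{-j} := I \cup \bigcup_{j'\neq j} S_{j'}$ and reusing the single-set bound, whereas you split the bad event into ``$S_j$ meets $I$'' and ``$S_j$ meets $S_{j'}$'' and bound each separately; both yield the identical bound $(v+tv')\,v'\,\eta$.
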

\begin{proof}
Let $\eta := \max_{i,j} p_{\vec{a}, \vec{c}, \{i\}, \{j\},t}$.
    By union bound, for any set $I$, we have
    \[
        p_{\vec{a},\vec{c}, I, \{j\},t} \leq \sum_{i\in I} p_{\vec{a},\vec{c},\{i\},\{j\}, t} \leq |I| \cdot \eta
        \; .
    \]
    Fix some set $J$. Let $S_j$ be as in the definition of the hitting probability, and let $I_{-j} := I \cup \bigcup_{j' \in J \setminus \{j\}} S_{j'}$. Notice that $|I_{-j}| \leq |I| + t|J|$.
    Then,
    \[
        p_{\vec{a}, \vec{c}, I, J,t} \leq \sum_{j \in J}  p_{\vec{a},\vec{c}, I_{-j}, \{j\},t} \leq (|I|+t|J|) \cdot |J|\cdot \eta
        \; .
    \]
    Finally, by union bound and Lemma~\ref{lem:hitting}, we have
    \[
        \eta \leq \frac{1+\eps}{1-\eps} \cdot \frac{r}{M}
    \]
    except with probability at most
    \[
    4 M M'\frac{Q^{1/4}}{\eps \cdot \binom{M-1}{t-1}^{1/4}}
    \;.
    \]
    The result follows.
\end{proof}

\subsection{Lattices and Lattice Problems}
\label{sec:lattices}

\begin{definition}[Shortest Independent Vectors Problem]
  \label{def:sivp}
  For an approximation factor $\gamma := \gamma(n) \geq 1$,
  $\gamma$-SIVP is the search problem defined as follows. Given a lattice~$\lat \subset \R^n$, output~$n$ linearly
  independent lattice vectors which all have length at most
  $\gamma(n)$ times the minimum possible, $\lambda_n(\lat)$.
\end{definition}

\def\vecx{\mathbf{x}}

\def\SIS{\mathsf{SIS}}

\begin{definition}[Short Integer Solutions]
     \label{def:sis}
   For integers $m,Q,\alpha$, the (average-case) short integer solutions problem $\SIS(m,Q,\beta)$ is defined by $m$ integers $a_1,\ldots,a_m$ drawn uniformly at random and independently from $\Z_Q$, and the goal is to come up with a {\em non-zero} vector $\vecx = (x_1,\ldots,x_m)$ where $$\sum_{i\in [m]} x_i a_i = 0 \pmod{Q} \hspace{.2in} \mbox{and}\hspace{.2in} ||\vecx||_1 := \sum_{i=1}^m |x_i| \leq \alpha$$
\end{definition}

Following the seminal work of Ajtai~\cite{AjtGeneratingHard96}, there have been several works that show how to solve the {\em worst-case} $\gamma$-SIVP problem given an algorithm for the {\em average-case} SIS problem. We will use the most recent one due to Gama et al.~\cite{GINX16} (specialized to the case of cyclic groups for simplicity).

\begin{theorem}[Worst-Case to Average-Case Reduction for SIS~\cite{MRWorstcaseAveragecase07,GINX16}]\label{thm:GINX}
   Let $n,Q, \beta \in \mathbb{N}$ where $Q = (\beta n)^{\Omega(n)}$. If there is an algorithm for the {\em average-case} SIS problem $\SIS(m,Q,\beta)$ over $\Z_Q$ that runs in time $T$, then there is an $(m+T)\cdot \mathsf{poly}(n)$-time algorithm for {\em worst-case} %
   $\widetilde{O}(\sqrt{n \log m}\cdot  \beta)$-SIVP on any $n$-dimensional lattice $L$.
\end{theorem}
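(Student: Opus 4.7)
The plan is to follow the classical iterative discrete-Gaussian framework of Micciancio–Regev~\cite{MRWorstcaseAveragecase07}, together with the refinement of Gama et al.~\cite{GINX16} that handles arbitrary composite moduli $Q = (\beta n)^{\Omega(n)}$. The overall reduction converts the worst-case $\gamma$-\sivp problem on an input lattice $\lat$ into the task of producing many samples from the discrete Gaussian $D_{\lat, s}$ with parameter $s = \widetilde{O}(\sqrt{n\log m}\cdot\beta)\cdot \lambda_n(\lat)$. A standard lemma then shows that $\poly(n)$ samples from $D_{\lat,s}$ contain $n$ linearly independent vectors of length $\leq s\sqrt{n}$ with overwhelming probability, yielding a $\gamma$-\sivp solution with the claimed $\gamma = \widetilde{O}(\sqrt{n\log m}\cdot\beta)$.

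To produce Gaussian samples at parameter $s$, I would bootstrap with LLL on the input basis to obtain samples from $D_{\lat, s_0}$ at some initial parameter $s_0 \leq 2^{O(n)} \lambda_n(\lat)$, and then iteratively contract the parameter in phases. In one phase, starting from $m$ independent samples $\vecx_1, \ldots, \vecx_m \sim D_{\lat, s}$, fix a basis $\basis$ of $\lat$ and form the residues $\vec{a}_i \in \Z_Q^n$ obtained by writing $\vecx_i$ in the basis $\basis$ and reducing coefficients modulo $Q$. Whenever $s$ exceeds the smoothing parameter $\eta_\eps(Q\lat)$, the $\vec{a}_i$ are statistically close to uniformly random, so one may invoke the $\SIS$ oracle on each of the $n$ coordinates in parallel to obtain $\vec{z} \in \Z^m$ with $\|\vec{z}\|_1 \leq \beta$ and $\sum_i z_i \vec{a}_i \equiv \vec{0} \pmod{Q}$. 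The combination $\vec{y} := \frac{1}{Q} \sum_i z_i \vecx_i$ lies in $\lat$, and by the Gaussian convolution lemmas of \cite{MRWorstcaseAveragecase07}, is statistically close to a sample from $D_{\lat, s'}$ with $s' \approx \beta s / Q$. Since $Q = (\beta n)^{\Omega(n)}$, each phase contracts $s$ aggressively, so $O(n)$ phases suffice to drive $s$ down to the target value.

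The main obstacle will be the distributional accounting. Each $\SIS$ call consumes several correlated input samples and must produce an output that is (jointly with whatever is revealed of the inputs) statistically close to a fresh Gaussian sample of the correct parameter. Handling this cleanly requires the smoothing-parameter formalism, a noise-flooding step that convolves the output with a small Gaussian to erase residual correlations, and a tail bound ensuring the input Gaussian samples are short enough that the conditional distribution of the output is essentially Gaussian. A second subtlety, which is precisely what \cite{GINX16} addresses for arbitrary composite $Q$, is that one must work with a descending chain of sublattices of $\lat$ corresponding to distinct divisors of $Q$ in order to extract a full-rank set of short vectors rather than short vectors confined to a strict sublattice; this is what forces $Q$ to be large as a function of $\beta$ and $n$, and what yields the $\sqrt{n\log m}$ slack in the final approximation factor.

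Putting the pieces together, after $O(n)$ phases the parameter reaches the target $s = \widetilde{O}(\sqrt{n \log m} \cdot \beta) \cdot \lambda_n(\lat)$, and drawing $\poly(n)$ such samples and selecting linearly independent ones yields the desired $n$ short lattice vectors. Each phase performs $\poly(n,m)$ calls to the $\SIS$ oracle on instances of size $\poly(n,\log Q) = \poly(n)$, together with $\poly(n)$ non-oracle work, so the total running time is $(m + T) \cdot \poly(n)$ as claimed.
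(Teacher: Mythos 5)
This theorem is not proved in the paper at all: it is imported as a black box from \cite{MRWorstcaseAveragecase07,GINX16}, so there is no internal proof to compare against. Your sketch does follow the standard iterative discrete-Gaussian framework of those works, and at that level the architecture (bootstrapping with LLL, contracting the Gaussian parameter via SIS-oracle combinations, smoothing-parameter conditions, extracting $n$ independent short vectors from $\poly(n)$ samples) is the right one.

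There is, however, one concrete misstep. You propose to form residues $\vec{a}_i \in \Z_Q^n$ and then ``invoke the SIS oracle on each of the $n$ coordinates in parallel.'' This cannot work with the paper's Definition~\ref{def:sis}, where SIS is \emph{one-dimensional}: the oracle takes scalars $a_1,\ldots,a_m \in \Z_Q$ and returns a single $\vec{z}$ annihilating that one scalar instance. Running it separately on each of the $n$ coordinates yields $n$ different vectors $\vec{z}^{(1)},\ldots,\vec{z}^{(n)}$, none of which makes $\sum_i z_i \vec{a}_i \equiv \vec{0} \pmod{Q}$ hold in all coordinates simultaneously, so $\frac{1}{Q}\sum_i z_i \vec{x}_i$ need not be a lattice point. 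The actual mechanism in \cite{GINX16} (and the reason the paper specializes to cyclic groups with $Q = (\beta n)^{\Omega(n)}$) is structural lattice reduction: one constructs a sublattice $\lat' \subseteq \lat$ with $\lat/\lat' \cong \Z_Q$ and good geometry, maps each Gaussian sample to its single residue in this cyclic quotient, and makes \emph{one} oracle call whose output lands in $\lat'$. You gesture at the sublattice chain later in the proposal, but as a remedy for full-rankness rather than as the device that makes a scalar oracle usable at all; the per-coordinate parallel invocation should be deleted and replaced by this quotient construction. Beyond that, the proposal explicitly defers the distributional accounting (smoothing, noise flooding, independence of oracle outputs across phases), which is where essentially all of the technical content of \cite{MRWorstcaseAveragecase07,GINX16} lives, so as written it is a correct plan referencing the cited works rather than a self-contained proof.
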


\section{Variants of Average-case \texorpdfstring{$k$}{k}-SUM: Totality and Reductions}\label{sec:variants}

We define two variants of average-case $k$-SUM, one over the integers (which is the standard version of $k$-SUM) and one over the finite group $\Z_Q$ of integers modulo $Q$. We show that the hardness of the two problems is tied together, which will allow us to use the modular version for our results down the line.

\begin{definition}[Average-case $k$-SUM]
For positive integers $m, k \geq 2$ and $u \geq 1$, the {\em average-case} $\kSUMG{k}{u}{m}$ problem is the search problem defined as follows. The input is $a_1,\ldots, a_m \in [-u,u]$ chosen uniformly and independently at random, and the goal is to
find $k$ distinct elements $a_{i_1},\ldots, a_{i_k}$ with $a_{i_1} + \cdots + a_{i_k} = 0$.
\end{definition}

We define the modular version of the problem where the instance consists of numbers chosen at random from the finite additive group $\Z_Q$ of numbers modulo $Q$. This will appear as an intermediate problem in our algorithm in Section~\ref{sec:BKW} and our worst-case to average-case reduction in Section~\ref{sec:wcac}.

\begin{definition}[Average-case Modular $k$-SUM]
For integers $m, k \geq 2$ and integer modulus $Q \ge 2$, the average-case {$\kSUMG{k}{\Z_Q}{m}$} problem is the search problem defined as follows.  The input is  $a_1,\ldots, a_m \sim$ {$\Z_Q$} chosen uniformly and independently at random, and the goal is to
find $k$ distinct elements $a_{i_1},\ldots, a_{i_k}$ with $a_{i_1} + \cdots + a_{i_k} = 0$ {$\pmod{Q}$}.
\end{definition}

We highlight the distinction in our notation for the two problems. The former (non-modular version) is denoted $\kSUMG{k}{u}{m}$ (the first parameter is the bound $u$ on the absolute value of the elements), whereas the latter is denoted {$\kSUMG{k}{\Z_Q}{m}$}  (the first parameter indicates the group on which the problem is defined). The second parameter always refers to the number of elements in the instance.

We now show that the modular problem is total when $\binom{m}{k} \gtrsim Q$ and is unlikely to have a solution when $\binom{m}{k} \lesssim Q$.

\begin{lemma}\label{lem:totalmod}
    If $a_1,\ldots, a_m \sim \Z_Q$ are sampled uniformly at random, and $E_k$ is the event that there exist distinct indices $i_1,\ldots, i_k$ with $a_{i_1} + \cdots +  a_{i_k} = 0 \pmod{Q}$, then
    \[
        1-Q/\binom{m}{k} \leq \Pr[E_k] \leq \binom{m}{k}/Q
        \; .
    \]
\end{lemma}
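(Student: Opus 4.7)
The plan is to use the second moment method. For each $k$-subset $S \in \binom{[m]}{k}$, let $X_S := \mathbf{1}[\sum_{i\in S} a_i = 0 \bmod Q]$ and let $X := \sum_S X_S$ count the number of zero-sum $k$-subsets. Since for any fixed nonempty $S$ the sum $\sum_{i\in S} a_i \bmod Q$ is uniform on $\Z_Q$ (the distribution is uniform because at least one $a_i$ is uniform and independent of the rest, and a uniform shift of anything is uniform), we get $E[X_S] = 1/Q$ and hence $E[X] = \binom{m}{k}/Q$.

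For the upper bound on $\Pr[E_k] = \Pr[X \geq 1]$, I would apply Markov's inequality (equivalently, a union bound): $\Pr[X \geq 1] \leq E[X] = \binom{m}{k}/Q$.

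For the lower bound I would argue that the indicators $X_S$ are pairwise uncorrelated. For distinct $S \neq T$, pick any $i \in S \triangle T$ (WLOG $i \in S \setminus T$), and condition on all $a_j$ with $j \neq i$. Then $X_T$ is determined, while $X_S$ still has conditional probability exactly $1/Q$ of being $1$ (since $a_i$ is uniform and independent), so $E[X_S X_T] = 1/Q^2 = E[X_S]E[X_T]$. Hence
\[
\mathrm{Var}(X) = \sum_S \mathrm{Var}(X_S) = \binom{m}{k}\left(\frac{1}{Q} - \frac{1}{Q^2}\right) \leq \binom{m}{k}/Q = E[X].
\]
Chebyshev's inequality then gives
\[
\Pr[X = 0] \leq \Pr[|X - E[X]| \geq E[X]] \leq \frac{\mathrm{Var}(X)}{E[X]^2} \leq \frac{1}{E[X]} = \frac{Q}{\binom{m}{k}},
\]
which yields $\Pr[E_k] \geq 1 - Q/\binom{m}{k}$.

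There is no real obstacle here; the only subtle point is verifying the pairwise-independence claim for the indicators $X_S$, which (as explained) follows from the presence of a coordinate in the symmetric difference $S \triangle T$ together with the fact that a uniform random element of $\Z_Q$ remains uniform after any independent shift. Everything else is a routine application of Markov and Chebyshev.
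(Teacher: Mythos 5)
Your proof is correct and follows essentially the same route as the paper: a union bound (equivalently Markov on the count $X$) for the upper bound, and pairwise independence of the indicators plus Chebyshev for the lower bound. You simply fill in the variance computation and the conditioning argument for pairwise independence that the paper leaves implicit.
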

\begin{proof}
    Notice that for fixed indices $i_1,\ldots, i_k$, the probability that $a_{i_1} + \cdots + a_{i_k} = 0$ is exactly $1/Q$. The upper bound then follows from a union bound over all $\binom{m}{k}$ $k$-tuples of indices. Furthermore, notice that $i_1,\ldots, i_k$ and $j_1,\ldots, j_k$, the event that $a_{i_1} + \cdots + a_{i_k} = 0 \pmod{Q}$ is independent of the event that $a_{j_1} + \cdots + a_{j_k} = 0 \pmod{Q}$ as long as $\{i_1,\ldots, i_k\} \neq \{j_1,\ldots, j_k\}$. The lower bound then follows from Chebyshev's inequality.
\end{proof}

\begin{lemma}
    If $a_1,\ldots, a_m \sim [-u,u]$ are sampled uniformly at random, and $E_k$ is the event that there exist distinct indices $i_1,\ldots, i_k$ with $a_{i_1} + \cdots +  a_{i_k} = 0$, then
    \[
        1- e^{-\alpha} \leq \Pr[E_k] \leq \binom{m}{k}/(2u+1)
        \; ,
    \]
    where 
    \[
        \alpha := \frac{1}{4k+2} \cdot \Big\lfloor \frac{m}{k(20u+10)^{1/k}} \Big\rfloor \approx m/(k^2 u^{1/k})
        \; .
    \]
\end{lemma}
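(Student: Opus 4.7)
The upper bound follows directly from the union bound: there are $\binom{m}{k}$ possible $k$-subsets, and for any fixed subset, conditioning on the first $k-1$ elements shows that the last must take one specific integer value in $[-u,u]$, which happens with probability at most $1/(2u+1)$.

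For the lower bound, my plan is to partition the $m$ samples into $t := \lfloor m/(k(20u+10)^{1/k}) \rfloor$ disjoint blocks, each of size at most $\lceil k(20u+10)^{1/k}\rceil$, and to show that each block independently contains a $k$-SUM solution with probability at least $1/(4k+2)$. Since the blocks are disjoint and the $a_i$ are i.i.d., these events are mutually independent, so
\[
\Pr[E_k] \;\geq\; 1 - (1 - 1/(4k+2))^t \;\geq\; 1 - e^{-t/(4k+2)} \;=\; 1 - e^{-\alpha}
\; .
\]
Inside a single block, I would further split into $k$ sub-blocks of positive integer sizes $B_1,\ldots,B_k$ chosen so that $\prod_i B_i \geq 20u+10$ while $\sum_i B_i$ is at most the block size; this is feasible by AM--GM with a bit of rounding. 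Let $\mathcal{F}$ be the family of candidate $k$-tuples obtained by picking one element from each sub-block (so $|\mathcal{F}| \geq 20u+10$), and let $N$ count those candidates whose entries sum to zero over $\Z$. I would then apply the Paley--Zygmund inequality $\Pr[N\geq 1] \geq \expect[N]^2/\expect[N^2]$.

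The heart of the argument, and the main obstacle, is the second-moment estimate. The key tool is the following observation: for every $j \geq 1$, the pmf $f_j$ of a sum of $j$ i.i.d.\ uniform integers on $[-u,u]$ satisfies $f_j(0) \leq 1/(2u+1)$. (Fix any $j-1$ of the variables; only one value of the remaining one produces a zero sum.) For two candidates $T_1, T_2 \in \mathcal{F}$ sharing exactly $k-j$ sub-block positions, writing the joint event in terms of the shared piece of $k-j$ i.i.d.\ uniforms and the two independent disjoint pieces of $j$ uniforms each, I obtain
\[
\Pr[T_1,T_2 \text{ both sum to } 0] \;=\; \sum_a f_{k-j}(a) f_j(-a)^2 \;\leq\; f_j(0)\cdot f_k(0) \;\leq\; p/(2u+1)
\; ,
\]
where $p := f_k(0)$ and I used $f_j(-a) \leq f_j(0)$ by symmetric unimodality.

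Summing over all ordered pairs of candidates and using that the total number of ordered pairs is $|\mathcal{F}|^2$, this yields $\expect[N^2] \leq \expect[N](1 + |\mathcal{F}|/(2u+1)) \leq 11\,\expect[N]$, since $|\mathcal{F}|/(2u+1) \leq 10$ by the design of the sub-blocks. Combined with $\expect[N] = |\mathcal{F}|\, p \geq (20u+10)/(2ku+1) \geq 10/(k+1)$, using $p \geq 1/(2ku+1)$ (because $f_k$ is supported on $2ku+1$ integers and attains its maximum at $0$ by symmetric unimodality), Paley--Zygmund delivers $\Pr[N\geq 1] \geq 10/(11(k+1)) \geq 1/(4k+2)$, as needed. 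The remaining work is to carefully verify that the sub-block sizes and the various rounding steps can indeed be carried out with the stated block count $t$, which I expect to be routine.
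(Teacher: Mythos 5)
Your argument is correct in its essentials, and the lower bound takes a genuinely different route from the paper's. Both proofs share the same outer structure: partition the instance into $\lfloor m/(k(20u+10)^{1/k})\rfloor$ disjoint blocks, show each block contains a $k$-SUM with probability at least $1/(4k+2)$, and exponentiate. The difference is inside a block. The paper first finds a $k$-SUM \emph{modulo} $Q=2u+1$ via Lemma~\ref{lem:totalmod} (a Chebyshev/second-moment bound over all $\binom{m'}{k}$ tuples), and then transfers from ``$\equiv 0 \pmod{Q}$'' to ``$=0$ over $\Z$'' by arguing, via the monotonicity $p(k,s+1)\le p(k,s)$, that $\ell=0$ is the likeliest of the $2k+1$ possible values $\ell Q$, paying a factor $2k+1$. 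You instead run Paley--Zygmund directly over $\Z$ on the structured family of tuples with one element per sub-block, with unimodality entering through $f_j(-a)\le f_j(0)$ in the pairwise-correlation bound and $f_k(0)\ge 1/(2ku+1)$ in the first moment. Your route avoids the paper's somewhat delicate conditioning step (the indices realizing the modular solution are data-dependent), at the price of the correlation computation, which you carry out correctly. Two small repairs: (i) you cannot in general make $|\mathcal{F}|=\prod_i B_i$ simultaneously $\ge 20u+10$ and $\le 10(2u+1)$ with integer $B_i$, so the chain ``$\expect[N^2]\le 11\expect[N]$ and $\expect[N]\ge 10/(k+1)$'' does not quite stand as written; but this is harmless, since Paley--Zygmund in the form $\Pr[N\ge 1]\ge |\mathcal{F}|p/(1+|\mathcal{F}|/(2u+1))$ is increasing in $|\mathcal{F}|$, so only the lower bound on $|\mathcal{F}|$ is needed (even $|\mathcal{F}|\ge 2u+1$ already yields $1/(2k)\ge 1/(4k+2)$), and this slack also absorbs the deferred rounding of sub-block sizes; (ii) your blocks should have size at \emph{least} (not ``at most'') roughly $k(20u+10)^{1/k}$.
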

\begin{proof}
    The upper bound follows immediately from the upper bound in Lemma~\ref{lem:totalmod} together with the observation that elements that sum to zero over the integers must sum to zero modulo $Q := 2u+1$ as well.
    
    Let $m' := k (10Q)^{1/k}$. Let $E_k'$ be the event that there exist distinct indices $i_1,\ldots, i_k \leq m'$ with  $a_{i_1} + \cdots +  a_{i_k} = 0$. Notice that
    \[
        \Pr[E_k] \geq 1- (1-\Pr[E_k'])^{\floor{m/m'}} \geq 1 - \exp(-\floor{m/m'}\Pr[E_k'])
        \; .
    \]
    So, it suffices to show that 
    \[
        \Pr[E_k'] \geq \frac{1-Q/\binom{m'}{k}}{2k+1} \geq \frac{1}{4k+2}
        \; .
    \]
    
    By Lemma~\ref{lem:totalmod}, we know that with probability at least $1-Q/\binom{m'}{k}$, there exists a $k$-SUM that sums to zero modulo $Q$ in the first $m'$ elements. I.e., $a_{i_1} + \cdots + a_{i_k} = \ell Q$ for some $\ell \in \{-k,-k+1,\ldots, k-1,k\}$ and $i_1,\ldots, i_k \leq m'$. We wish to argue that $\ell = 0$ is at least as likely as $\ell = i$ for any $i$.

    Let $p(k',s) := \Pr[a_1 + \cdots + a_{k'} = s]$ for integers $k',s$. Notice that for $s \geq 0$, we have 
    \begin{align*}
        p(k'+1,s) - p(k'+1,s+1) 
            &= \big(p(k',-(s+u)) - p(k', s+u+1)\big)/(2u+1) \\
            &= \big(p(k', s+u) - p(k',s+u+1)\big)/(2u+1)
        \; .
    \end{align*}
    It then follows from a simple induction argument that $p(k,s+1) \leq p(k,s)$. In particular, $p(k,\ell Q) \leq p(k,0)$ for any $\ell$. 
    Therefore, letting $E_{k,Q}'$ be the event that the first $m'$ elements contain a $k$-SUM modulo $Q$, we have 
    \begin{align*}
        \Pr[E_k'] 
            &\geq \Pr[E_{k,Q}'] \cdot \Pr[a_{i_1} + \cdots + a_{i_k} = 0 \ | \ a_{i_1} + \cdots + a_{i_k} = 0 \pmod{Q}]\\
            &\geq \frac{\Pr[E_{k,Q}']}{2k+1}
            \; .
    \end{align*}
    Finally, by Lemma~\ref{lem:totalmod}, we have
    \[
        \Pr[E_{k,Q}'] \geq 1-Q/\binom{m'}{k} \geq 1/2
        \; ,
    \]
    as needed.
\end{proof}

\subsection{From \texorpdfstring{$k$}{k}-SUM to Modular \texorpdfstring{$k$}{k}-SUM and Back}

We first show that an algorithm for the modular $k$-SUM problem gives us an algorithm for the $k$-SUM problem. A consequence of this is  that when we describe the algorithm for $k$-SUM in Section~\ref{sec:BKW}, we will focus on the modular variant.

\begin{lemma}\label{lem:Qimpliesu}
Let $u$ be a positive integer and let $Q=2u+1$. If there is an algorithm for the $\kSUMG{k}{\Z_Q}{m}$ that runs in time $T$ and succeeds with probability $p$, then there is an algorithm for $\kSUMG{2k}{u}{2m}$ that runs in time $O(T)$ and succeeds with probability at least $p^2/k$.
\end{lemma}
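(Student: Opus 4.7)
The plan is to build a $\kSUMG{2k}{u}{2m}$ solver by two independent invocations of $A$, one on each half of the input (reduced modulo $Q$), and then gluing the two returned $k$-subsets into a single $2k$-subset. The subtle part is promoting a modular subset-sum equality to an equality over the integers.

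First I would set things up. Given $a_1,\dots,a_{2m}\in[-u,u]$ drawn uniformly, reduce each coordinate modulo $Q=2u+1$; because $[-u,u]$ is a complete set of residues, this is a bijection and the $\tilde a_i$ are uniform on $\Z_Q$. Split into two independent halves of size $m$, each a uniform $\kSUMG{k}{\Z_Q}{m}$ instance. Crucially, I would randomize $A$ by defining $A'(x) := A(\epsilon x)$ for a fresh uniform $\epsilon\in\{\pm1\}$ (independent across the two halves). Since negation is a bijection of $\Z_Q^m$ and preserves being a $k$-SUM modulo $Q$, $A'$ succeeds with probability $p$ and its output, when non-null, is a valid modular $k$-subset of the original (unsigned) instance.

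Next I would analyze the integer sum $s_j := \sum_{i\in S_j} a_i$ on each half. Since $s_j \equiv 0 \pmod Q$ and $|s_j|\le ku$, we have $s_j = \ell_j Q$ for some integer $|\ell_j| \le ku/(2u+1) < k/2$, so $s_j$ takes at most $k$ distinct values. The key property obtained from symmetrization: the joint distribution of $(a,\epsilon)$ on a single half equals that of $(-a,-\epsilon)$, and under this involution the subset returned by $A'$ is unchanged while its integer sum negates; hence $\mu(s) := \Pr[A'\text{ succeeds on the half and }s_j=s]$ satisfies $\mu(s)=\mu(-s)$, with $\sum_s \mu(s)=p$.

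Finally, the two halves are independent, so
\[
\Pr[\text{both succeed and } s_1+s_2 = 0]
= \sum_s \mu(s)\mu(-s) = \sum_s \mu(s)^2
\ge \frac{\bigl(\sum_s \mu(s)\bigr)^2}{|\mathrm{supp}(\mu)|}
\ge \frac{p^2}{k}
\]
by Cauchy--Schwarz on a support of size at most $k$. On this event, $S_1\cup S_2$ is a $2k$-subset (automatically disjoint since the halves are disjoint) summing to $0$ over the integers, so the reduction outputs it; the running time is two calls to $A$ plus $O(m)$, i.e.\ $O(T)$. I expect the main obstacle to be recognizing the need for Step~2: without symmetrization, a pathological $A$ could, for example, always return subsets with integer sum $+Q$, in which case the two halves would always sum to $2Q \neq 0$ and the product-of-independent-halves argument would give probability $0$. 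Once random negation forces the conditional distribution of each $s_j$ to be symmetric on a support of size at most $k$, the Cauchy--Schwarz bound is immediate.
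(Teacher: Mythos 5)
Your proof is correct and takes essentially the same approach as the paper's: the paper achieves the needed symmetry by deterministically negating the second half, so that the two integer multiples $\alpha_1 Q, \alpha_2 Q$ of $Q$ are i.i.d.\ on a support of size at most $k$ and their collision probability is at least $1/k$ by Cauchy--Schwarz, whereas you randomize the sign of each half independently to make each sum's distribution symmetric --- an equivalent device. The running-time and probability accounting match, so there is nothing to fix.
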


\begin{proof}
 Let $\mathcal{A}$ be the purported algorithm for $\kSUMG{k}{\Z_Q}{m}$.
 The algorithm for $\kSUMG{2k}{u}{2m}$  receives $2m$ integers $a_1,\ldots,a_{2m}$ in the range $[-u, u]$ and works as follows. We use the natural embedding to associate elements in $\Z_Q$ with elements in $[-u, u]$, so we may think of $a_1,\ldots,a_{2m}$ also as elements in $\Z_Q$ (simply by considering their coset modulo $Q$).
 
 \begin{itemize}
     \item Run $\mathcal{A}$ on $a_1,\ldots,a_m$ to obtain a $k$-subset  $S_1$. If $\mathcal{A}$ does not succeed, then fail.
     \item Run $\mathcal{A}$ on $-a_{m+1},\ldots,-a_{2m}$ to obtain a $k$-subset $S_2$. If $\mathcal{A}$ does not succeed, then fail.
     \item If $\sum_{i\in S_1} a_i = -\sum_{i \in S_2} a_i$, output $S_1\cup S_2$ as the $2k$-subset. Fail otherwise.
 \end{itemize}
 
It is clear that the run-time is $O(T)$ and that if the algorithm does not fail then it indeed outputs a valid $2k$-sum. It suffices to bound the probability that the algorithm succeeds.

Since the first two steps run $\mathcal{A}$ on independent and identically distributed input, we can deduce that the probability that both succeed is $p^2$, and in the case that both succeed, their output satisfies
$$ \sum_{i\in S_1} a_i = \alpha_1 Q \hspace{.1in} \mbox{and} \hspace{.1in} \sum_{i\in S_1} a_i = \alpha_2 Q $$
for some integers $\alpha_1,\alpha_2 \in (-k/2,k/2)$, which are independent and identically distributed random variables. The probability that $\alpha_1 = \alpha_2$ is therefore at least $1/k$, since the collision probability of a random variable is bounded by the inverse of its support size. If this happens then, $\sum_{i\in S_1} a_i = \sum_{i\in S_2} a_i$ and the algorithm succeeds. Thus, we conclude that our algorithm succeeds with probability at least $p^2/k$.
\end{proof}

Finally, we show a proof in the other direction. Namely, that an algorithm for the $k$-SUM problem gives us an algorithm for the modular $k$-SUM problem. We will use this when we describe the worst-case to average-case reduction in Section~\ref{sec:wcac}.

\begin{lemma}\label{lem:mod-to-z}
 For $m \geq k\cdot u^{2/k}$, if there is an algorithm for $\kSUMG{k}{u}{m}$ that runs in time $T$ and succeeds with probability $p$, then there is an algorithm for $\kSUMG{k}{\Z_{2u+1}}{m}$ that runs in time $T$ and succeeds with probability $p$. 
\end{lemma}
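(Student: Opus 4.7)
The plan is to use the natural bijection $\phi : \Z_Q \to \{-u,-u+1,\ldots,u\}$ (mapping each residue class to its unique representative in this interval) to turn a random modular instance into a random integer instance, then invoke the hypothesized integer algorithm and observe that any integer $k$-SUM is automatically a modular $k$-SUM. The core point is that lifting is distribution-preserving in one direction and solution-preserving in the other, so no further analysis is really needed.

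Concretely, given the modular input $(b_1,\ldots,b_m) \in \Z_Q^m$, I would set $a_i := \phi(b_i) \in [-u,u]$ and run the assumed algorithm $\mathcal{A}$ for $\kSUMG{k}{u}{m}$ on $(a_1,\ldots,a_m)$. If $\mathcal{A}$ returns a $k$-subset $\{i_1,\ldots,i_k\}$ with $a_{i_1}+\cdots+a_{i_k}=0$ over $\Z$, I output the same subset; otherwise I fail. Correctness is immediate because $a_i \equiv b_i \pmod{Q}$ by construction, so reducing the integer equation modulo $Q$ yields $b_{i_1}+\cdots+b_{i_k} \equiv 0 \pmod{Q}$. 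The running time is $T$ plus an $O(m)$ overhead for the lift, which is absorbed.

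For the success probability, the key observation is that $\phi$ is a bijection between $\Z_Q$ and $\{-u,\ldots,u\}$ (this uses $Q=2u+1$ exactly), so it pushes the uniform distribution on $\Z_Q$ forward to the uniform distribution on $[-u,u]$. Thus, when $(b_1,\ldots,b_m) \sim \Z_Q^m$ is uniform, the lifted tuple $(a_1,\ldots,a_m)$ is uniform on $[-u,u]^m$, and $\mathcal{A}$ succeeds on it with probability exactly $p$ over the joint choice of input and internal randomness. Every success of $\mathcal{A}$ yields a success of our reduction, giving the claimed success probability $p$.

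There is no real obstacle, and the hypothesis $m \geq k\cdot u^{2/k}$ is not used in the reduction itself; its role is merely to guarantee (via the totality arguments earlier in the section) that the integer problem $\kSUMG{k}{u}{m}$ is in the dense/total regime, so that the assumed success probability $p$ is a meaningful quantity rather than being capped by the probability that a solution even exists. Note also that this is a \emph{one-sided} reduction: a modular $k$-SUM need not lift to an integer $k$-SUM, so our reduction can be ``wasteful'' in that $\mathcal{A}$ might fail on lifts that nevertheless do admit a modular solution—but this only means our bound of $p$ is conservative, not that correctness is ever violated.
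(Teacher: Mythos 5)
Your proposal is correct and matches the paper's proof: both identify $\Z_{2u+1}$ with $[-u,u]$ via the natural bijection, run the integer $k$-SUM algorithm on the (still uniformly distributed) lifted instance, and observe that any integer solution is automatically a solution modulo $Q$. Your remark that the hypothesis $m \geq k\cdot u^{2/k}$ only serves to place the integer problem in the total regime (rather than being needed for the reduction's correctness) is also consistent with how the paper uses it.
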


\begin{proof}
Let $\mathcal{A}$ be the purported algorithm for $\kSUMG{k}{u}{m}$.
 The algorithm for $\kSUMG{k}{\Z_{2u+1}}{m}$   receives $m$ integers $a_1,\ldots,a_{2m} \in \mathbb{Z}_{2u+1}$ and works as follows. 
 
 As before, identify $\Z_{2u+1}$ with the interval $[-u,u]$ and run $\mathcal{A}$ on $a_1,\ldots,a_m$. By Lemma~\ref{lem:mod-to-z}, since $m \geq k\cdot u^{2/k}$, there is a $k$-subset $S$ such that $\sum_{i\in S} a_i = 0$. In particular, $\sum_{i\in S} a_i = 0 \pmod{Q}$, so it is a modular $k$-SUM solution as well.
\end{proof}

\section{The  \texorpdfstring{$u^{O(1/\log k)}$-time}{faster} Algorithm for Average-case \texorpdfstring{$k$}{k}-SUM}
\label{sec:BKW}

In this section, we describe a variant of the Blum-Kalai-Wasserman algorithm~\cite{BKW03} for the average-case $k$-SUM problem that runs in time $u^{O(1/\log k)}$.

\begin{theorem}
    There is a $\widetilde{O}(2^\ell q^2)$-time algorithm that solves average-case $\kSUMG{2^\ell}{\Z_{q^\ell}}{m}$ for $m = \widetilde{\Theta}(2^\ell q^2)$.
\end{theorem}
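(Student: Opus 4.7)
The plan is to implement Wagner's generalized birthday ($k$-tree) algorithm directly on the composite modulus $q^\ell$. The key observation is that every element of $\Z_{q^\ell}$ has a base-$q$ expansion, and the algorithm will zero out one ``digit'' of this expansion per round across $\ell = \log k$ rounds. First, I would partition the input list of $m$ i.i.d.\ uniform elements of $\Z_{q^\ell}$ into $k = 2^\ell$ sublists of size $m/k = \widetilde{\Theta}(q^2)$ each; this is convenient because each element of the eventual output list will automatically be a sum of exactly one element from each of the $k$ sublists, hence a sum of $k$ distinct originals as required by the $k$-SUM problem.

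At round $i = 1,\ldots,\ell$ we hold $2^{\ell-i+1}$ lists whose entries are all divisible by $q^{i-1}$, and we merge the $j$-th pair $L^{(i-1)}_{2j-1}, L^{(i-1)}_{2j}$ into a new list $L^{(i)}_j$ consisting of those pair sums whose next base-$q$ digit cancels. Concretely, sort each of the two lists by the ``next-digit'' value $(a/q^{i-1}) \bmod q \in \Z_q$ and, for each $v \in \Z_q$, output all sums $a+b$ with $a$ in bucket $v$ of one list and $b$ in bucket $-v$ of the other (self-joining the buckets at $0$ and at $q/2$ when applicable). By induction every element added to $L^{(i)}_j$ is divisible by $q^i$, so after $\ell$ rounds we obtain a single list of elements that all equal $0$ in $\Z_{q^\ell}$; each such element is a valid $k$-SUM solution.

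For the analysis I would argue inductively that each intermediate list is statistically close to uniform on the appropriate coset $q^{i-1}\Z_{q^\ell}$, with size concentrated around $\widetilde{\Theta}(q^2)$ (randomly truncating a list that grows too large). By a leftover-hash-style argument analogous to Corollary~\ref{cor:LHL_A} together with standard Chernoff concentration, each of the $q$ buckets in such a list has size within a constant factor of $|L|/q = \widetilde{\Theta}(q)$, so merging two such lists yields $\widetilde{\Theta}(q \cdot q) = \widetilde{\Theta}(q^2)$ surviving pair sums as desired. In particular the final round produces a non-empty output list with high probability, and we output any element as our $k$-SUM.

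Each round's cost is dominated by sorting and scanning the active lists, costing $\widetilde{O}(m)$ summed across all lists in that round; summed over $\ell = O(\log k)$ rounds this is $\widetilde{O}(m) = \widetilde{O}(2^\ell q^2)$. The main obstacle I anticipate is controlling the joint distribution of the entries of the intermediate lists: since pair sums produced in the same round share their original summands, they are not genuinely independent, and one must carefully track the accumulated statistical distance from uniform across all $\ell$ rounds while maintaining the bucket-concentration bounds. The extra factor of $q$ over the bare Wagner parameters (giving $q^2$ in place of $q$) provides exactly the slack needed so that this accumulated error remains $o(1)$ and the Chernoff / second-moment bounds on the bucket sizes hold at every level.
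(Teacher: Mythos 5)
Your algorithm is the full Wagner $k$-tree: partition the input into $2^\ell$ leaf lists, merge pairs of lists by taking \emph{all} cross-bucket pair sums, and truncate to control list sizes. The paper's algorithm is different in a way that matters: it keeps a \emph{single} list throughout and, in round $i$, greedily extracts \emph{disjoint} pairs $(a,b)$ from that one list with $a+b=0 \bmod q^i$, so the list roughly halves each round. Because the paper's pairs are disjoint, each entry of $L_{i+1}$ is a sum over a set of original inputs disjoint from every other entry's set, and the pairing decisions depend only on residues mod $q^{i}$; hence the ``next digits'' $(b_j/q^{i})\bmod q$ of the new entries are \emph{exactly} independent and uniform, and correctness reduces to a Chernoff bound on the $q$ bucket counts. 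In your all-pairs merge the intermediate entries share summands, so they are not independent --- and, as you yourself flag, the inductive claim that each list is close to a list of independent uniforms with well-concentrated buckets is precisely the step you do not prove. Asserting that ``the extra factor of $q$ provides exactly the slack needed'' is not an argument: the joint distribution of $\widetilde{\Theta}(q^2)$ dependent pair sums is genuinely far from i.i.d.\ (the sums formed from buckets $v$ and $-v$ are heavily correlated), so Corollary~\ref{cor:LHL_A} plus a union bound does not apply, and one would instead need a second-moment or martingale analysis of the bucket counts under this dependence, propagated through all $\ell$ levels. That analysis is known to be doable for the $k$-tree algorithm but is substantially more involved; as written, this is the genuine gap in your proof.

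The cheap fix is exactly the paper's: replace the all-pairs merge by a disjoint matching, so each element is used in at most one surviving pair. You give up the quadratic blow-up per merge (incidentally, your count of $\widetilde{\Theta}(q\cdot q)=\widetilde{\Theta}(q^2)$ surviving sums undercounts the all-pairs merge, which produces $q$ bucket pairs each contributing $\widetilde{\Theta}(q^2)$ sums, i.e.\ $\widetilde{\Theta}(q^3)$ --- harmless, since you truncate anyway), but with $m=\widetilde{\Theta}(2^\ell q^2)$ the list sizes comfortably survive $\ell$ rounds of halving-with-small-loss, and independence of the intermediate entries becomes immediate. Note also that distinctness of the $2^\ell$ final summands already follows from disjointness of the extracted pairs in the single-list version, so the $2^\ell$-way initial partition is not needed for that purpose.
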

\begin{proof}
    On input $a_1,\ldots, a_m \in \Z_{q^\ell}$ with $m := 1000\ell^2 q^2 2^\ell \log q = \widetilde{\Theta}(2^\ell q^2)$, the algorithm behaves as follows. Let $L_1 := (a_1,\ldots, a_m)$.
    For $i = 1,\ldots, \ell$, the algorithm groups the elements in $L_i$ according to their value modulo $q^i$. It then greedily groups them into $m_{i+1}$ disjoint points $(a,b)$ with $a + b = 0 \bmod q^i$. It sets $L_{i+1}$ to be the list of sums of these pairs (and records the indices of the $2^i$ input elements that sum to $a+b$). If at any point the algorithm fails to find such pairs, it simply fails; otherwise, the algorithm outputs the elements $a_{i_1}, \cdots , a_{i_{2^\ell}}$ satisfying $\sum a_{i_j} = 0 \bmod q^\ell$ found in the last step.
    
    The running time of the algorithm is clearly $\poly(\ell, \log q, \log m) m$ as claimed. To prove correctness, we need to show that at each step the algorithm is likely to succeed in populating the list $L_{i+1}$ with at least $m_i := (\ell^2-i^2)/\ell^2 \cdot m/2^{i-1}$ elements, since clearly the algorithm outputs a valid $2^\ell$-SUM in this case.
    
    Suppose that the algorithm succeeds up to the point where it populates $L_i$. Let $L_i = (b_1,\ldots, b_{m_i})$, and $b_i' := (b_i/q^{i-1}) \bmod q$, where the division by $q^{i-1}$ is possible because $b_i = 0 \bmod q^{i-1}$ by assumption. Notice that the $b_i'$ are independent and uniformly random. For $j \in \Z_q$, let $c_j := |\{i \ : \ b_i' = j \bmod q\}|$. Notice that the algorithm successfully populates $L_{i+1}$ if and only if
    \[
        \sum_{j\in\Z_q} \min\{ c_j, c_{-j} \}/2 \geq m_{i+1}
        \; .
    \]
    By the Chernoff-Hoeffding bound, we have that 
    \[
        \Pr\big[c_j < m_i/q - 10\sqrt{m_i \log m_i}\big] \leq 1/m_i^2
    \]
    It follows that
    \[
        \sum_j \min\{ c_j, c_{-j} \}/2 \geq q \min \{ c_j\}/2 \geq m_i/2 - 5q \sqrt{m_i \log m_i} \geq m_{i+1}
    \]
    except with probability at most $1/m_i$. By union bound, we see that the algorithm succeeds in populating every list except with probability at most $\sum 1/m_i \ll 1/10$, as needed.
\end{proof}

Combining this with Lemma~\ref{lem:Qimpliesu} (the reduction from $k$-SUM to modular $k$-SUM), we obtain the following corollary.

\begin{corollary}
    For $u = (q^\ell-1)/2$ for odd $q$ and $k = 2^{\ell+1}$, there is a $u^{O(1/\log k)}$-time algorithm for $\kSUMG{k}{u}{m}$ for $m  = u^{\Theta(1/\log k)}$.
\end{corollary}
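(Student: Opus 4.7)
The plan is to combine the preceding theorem with Lemma~\ref{lem:Qimpliesu}, in a single substitution of parameters. Setting $Q := q^\ell$ and observing that $q$ odd makes $Q$ odd, the given relation $u = (q^\ell - 1)/2$ yields exactly $Q = 2u+1$, which is the hypothesis required by Lemma~\ref{lem:Qimpliesu}. The theorem supplies a $\widetilde{O}(2^\ell q^2)$-time algorithm solving average-case $\kSUMG{2^\ell}{\Z_{q^\ell}}{m'}$ on $m' := \widetilde{\Theta}(2^\ell q^2)$ uniform inputs, with constant success probability $p$. Feeding this into Lemma~\ref{lem:Qimpliesu} produces an algorithm for $\kSUMG{2^{\ell+1}}{u}{2m'}$ in the same asymptotic running time and with success probability at least $p^2/2^\ell = \Omega(2^{-\ell})$.

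Next I would amplify this success probability to a constant. Take $m := O(2^\ell) \cdot 2m'$ inputs, partition them into $O(2^\ell)$ disjoint blocks of size $2m'$, and run the above procedure independently on each block; the probability that every block fails is $(1 - \Omega(2^{-\ell}))^{\Theta(2^\ell)}$, which is bounded away from $1$. This inflates both the sample count and running time by a factor of at most $O(2^\ell) = O(k)$.

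Finally I would convert to the corollary's parameters. Writing $k = 2^{\ell+1}$ gives $\ell = \log k - 1$, and $q^\ell = 2u+1$ gives $q = \Theta(u^{1/\ell}) = u^{\Theta(1/\log k)}$, so $q^2 = u^{\Theta(1/\log k)}$. All remaining factors (the $2^\ell$ from the theorem, the $O(2^\ell)$ from amplification, and the polylogarithmic factors hidden in $\widetilde{O}(\cdot)$) are polynomial in $k$ and $\log u$, which are absorbed into the constant hidden by the $u^{O(1/\log k)}$ notation. This gives both $m = u^{\Theta(1/\log k)}$ and running time $u^{O(1/\log k)}$, as required.

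The only mild obstacle is this last absorption step: the factor $2^\ell = \Theta(k)$ fits inside $u^{O(1/\log k)}$ only when $k \le u^{O(1/\log k)}$, i.e., when $(\log k)^2 = O(\log u)$. Outside this regime the claimed upper bound exceeds $\Omega(k)$ trivially (it is larger than the time required to write down the output), so the statement is either vacuous or trivially achievable by the meet-in-the-middle algorithm. Thus no real difficulty arises; the corollary is essentially a bookkeeping exercise translating the modular $(q,\ell)$ parameters of the theorem into the integer $(u,k)$ parameters via Lemma~\ref{lem:Qimpliesu}.
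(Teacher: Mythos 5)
Your proposal is correct and follows exactly the route the paper intends (the paper's "proof" is literally the one-line instruction to combine the theorem with Lemma~\ref{lem:Qimpliesu}); your added amplification over disjoint blocks to recover constant success probability from the $p^2/k$ guarantee is the right way to fill in the detail the paper leaves implicit. One small slip in your closing caveat: when $k > u^{O(1/\log k)}$ (i.e.\ $\ell \gg \log q$) the claimed bound is \emph{smaller} than the time needed to write the output, not larger --- but this is an imprecision in the corollary's own statement, not in your argument.
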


\section{From Worst-case Lattice Problems to Average-case \texorpdfstring{$k$}{k}-SUM}
\label{sec:simple}\label{sec:wcac}

In this section, we describe our main result, namely a worst-case to average-case reduction for $k$-SUM. We state the theorem below.

\begin{theorem}\label{thm:mainthm-red}
  Let $k,m,u,n$ be positive integers, and $0 < \eps < \eps'$ where $$u = k^{2(1+\eps')cn/\eps'} \mbox{  and  }
  m = u^{\eps /(2\log k)}~$$
  for some universal constant $c > 0$.
  If there is an algorithm for average-case $\kSUMG{k}{u}{m}$ that runs in time $T_{\mathsf{kSUM}} = T_{\mathsf{kSUM}}(k,u,m)$,  then there is an algorithm for the worst-case  $n^{1+\eps'}$-approximate shortest independent vectors problem (SIVP) that runs in time $2^{O(\eps n/\eps' + \log n)} \cdot T_{\mathsf{kSUM}}$.
\end{theorem}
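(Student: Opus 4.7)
The plan is to compose the worst-case-to-average-case reduction for SIS (Theorem~\ref{thm:GINX}) with a new reduction from the average-case SIS problem $\SIS(M, Q, \beta)$ to the average-case $\kSUMG{k}{u}{m}$ oracle. The technical heart is a BKW-style recursive step: choose $q = 2u+1$ and $L = \Theta(\log n/\log k)$ so that $Q = q^L = (\beta n)^{\Omega(n)}$ with $\beta := k^{2L} = n^{O(1)}$, and iteratively use the $k$-SUM oracle (converted via Lemma~\ref{lem:mod-to-z} to an oracle for $\kSUMG{k}{\Z_q}{m}$) to produce linear combinations of the SIS samples that vanish modulo higher and higher powers of $q$. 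After $L$ levels, a single non-trivial combination of $\ell_1$-weight at most $k^{2L} = \beta$ vanishes modulo $Q$, which is exactly the SIS solution demanded by Theorem~\ref{thm:GINX} for obtaining an $\widetilde O(\sqrt{n\log M}\cdot \beta) = n^{1+\eps'}$-approximation to SIVP.

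First I would carry out the parameter bookkeeping. Starting from the worst-case lattice instance, invoke Theorem~\ref{thm:GINX} with $M$ chosen to make the guaranteed approximation factor $n^{1+\eps'}$, and work out the values of $q, L, m$ dictated by the theorem statement so that $u = (q-1)/2 = k^{2(1+\eps')cn/\eps'}$ and $m = u^{\eps/(2\log k)}$. Given an SIS instance $\vec{a} = (a_1, \ldots, a_M) \in \Z_Q^M$, initialize a list $\mathcal{L}_0$ of ``tagged'' samples $(a_j, \vec{e}_j)$ where $\vec{e}_j \in \Z^M_{\geq 0}$ records the original-coordinate coefficients and has $\|\vec e_j\|_1 = 1$. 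The invariant I maintain at level $i$ is: $\mathcal{L}_i$ contains pairs $(y, \vec{x})$ with $y = \langle \vec{a}, \vec{x}\rangle \bmod Q$, $y \equiv 0 \pmod{q^i}$, and $\|\vec{x}\|_1 \leq k^{2i}$.

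The level-$i$ to level-$(i{+}1)$ step proceeds as follows. Divide each $y$ in $\mathcal{L}_i$ by $q^i$ to obtain an element of $\Z_{q^{L-i}}$, then reduce modulo $q$. To produce a single new entry in $\mathcal{L}_{i+1}$, sample $m$ random $k$-subsets $T_1,\ldots,T_m$ of $\mathcal{L}_i$ and form the $m$ sums $a'_j := \sum_{\ell \in T_j} (y_\ell/q^i) \bmod q$; feed $(a'_1,\ldots,a'_m)$ to the $k$-SUM oracle and expand the returned solution back through the $T_j$'s into a single new tagged sample $(y^\star, \vec{x}^\star)$ with $\vec{x}^\star \in \Z_{\geq 0}^M$, $\|\vec{x}^\star\|_1 \leq k^{2(i+1)}$, and $y^\star \equiv 0 \pmod{q^{i+1}}$. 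Because all intermediate coefficients are non-negative, $\vec{x}^\star$ is automatically non-zero, and the final-level output is a valid SIS solution. To populate $\mathcal{L}_{i+1}$ with enough elements for the next level, repeat this procedure with fresh random subsets; the total number of oracle calls is $O(ML)$.

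The hard part will be establishing correctness of the re-randomization against an adversarial oracle. Two statistical claims must be combined carefully with a union bound over all $O(ML)$ oracle calls and all $L$ levels. First, Corollary~\ref{cor:LHL_A} guarantees that each batch $(a'_1,\ldots,a'_m)$ is statistically close to uniform on $\Z_q^m$, provided $|\mathcal{L}_i| \geq q^{1/k} \cdot \mathrm{poly}(n)$ at every level --- this drives the choice of how many pairs to build per level and is where the $M' = u^{\Omega(\eps/\log k)}$ factor in the runtime enters. Second, because the same master sample $\vec{a}$ is re-used across all oracle calls, an adversarial oracle could try to force its $k$-subsets to collide; Corollary~\ref{cor:hitting_prob} bounds the probability of such collisions and, together with a careful simulation argument, shows that the joint distribution of all oracle calls is statistically close to one in which every batch is drawn from truly fresh uniform samples, so that the oracle's success probability on each call is lower-bounded by its average-case success probability. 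Putting everything together, the total running time is $M'\cdot L \cdot T_{\mathsf{kSUM}} + \mathrm{poly}(n)$, which by the choice of $L$ and $M'$ equals $2^{O(\eps n/\eps' + \log n)} \cdot T_{\mathsf{kSUM}}$, and the output is a valid $\beta$-bounded SIS solution that Theorem~\ref{thm:GINX} converts into an $n^{1+\eps'}$-approximate SIVP solver.
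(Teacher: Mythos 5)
Your proposal is correct and follows essentially the same route as the paper: compose the worst-case-to-average-case SIS reduction (Theorem~\ref{thm:GINX}) with a BKW-style iterative reduction from $\SIS$ over $\Z_{q^r}$ to modular $k$-SUM over $\Z_q$ (the paper's Lemma~\ref{lem:main}), using Corollary~\ref{cor:LHL_A} for re-randomization and Corollary~\ref{cor:hitting_prob} to defeat adversarial oracles, then Lemma~\ref{lem:mod-to-z} to pass to integer $k$-SUM. The invariant you maintain (weight $k^{2i}$, divisibility by $q^i$, non-negative coefficients ensuring non-triviality) and the parameter choices match the paper's argument.
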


When we say that a $k$-SUM algorithm succeeds, we mean that it outputs a $k$-subset of the input that sums to $0$ with probability $1-\delta$ for some tiny $\delta$. This can be achieved starting from an algorithm that succeeds with (some small) probability $p$ by repeating, at the expense of a multiplicative factor of $1/p\cdot \log(1/\delta)$ in the run-time. We ignore such issues for this exposition, and assume that the algorithm outputs a $k$-sum with probability $1-\delta$ for a tiny $\delta$.

Before we proceed to the proof, a few remarks on the parameters of Theorem~\ref{thm:mainthm-red} are in order. First, note that the parameter settings imply that $m^k \gg u$, therefore putting us in the total regime of parameters for $k$-SUM. Secondly, setting $\epsilon'=100$ (say), we get the following consequence: if there is a $k$-SUM algorithm that, on input $m = u^{\epsilon/(2\log k)}$ numbers, runs in time roughly $m$, then we have an $n^{101}$-approximate SIVP algorithm that runs in time $\approx 2^{\epsilon c n}$. Now, $\epsilon$ is the ``knob'' that one can turn to make the SIVP algorithm run faster, assuming a correspondingly fast $k$-SUM algorithm that works with a correspondingly smaller instance.

\begin{proof}
  The theorem follows from the following observations:
  \begin{itemize}
  \item
  First, by Theorem~\ref{thm:GINX}, there is a reduction from $\widetilde{O}(\sqrt{n\log m'}\cdot \beta)$-approximate SIVP to $\SIS(m',Q,\beta)$, where we take $m' := \ceil{k^{10cn/(k \eps')} n^{10}}$. The reduction produces SIS instances over $\Z_Q$ where $Q \geq (\beta n)^{cn}$ for some constant $c$, and works as long as the SIS algorithm produces solutions of $\ell_1$ norm at most $\beta$. If the SIS algorithm runs in time $T_{\SIS} = T_{\SIS}(m',Q,\beta)$, the SIVP algorithm runs in time $(m'+T_{\SIS})\cdot \mathsf{poly}(n)$. We take $\beta := n^{\eps'}$.

  \item
  Second, as our main technical contribution, we show in Lemma~\ref{lem:main} how to reduce SIS to $k$-SUM. Note that Theorem~\ref{thm:GINX} gives us the freedom to pick $Q$, as long as it is sufficiently large. We will set $Q = q^r$ where $r = \floor{\eps' \log n/(2 \log k)}$ for a prime $q \approx u \approx (\beta n)^{cn/r} \approx k^{2(1+\eps') cn/\eps'}$.

  Now, Lemma~\ref{lem:main} (with $k = t$) shows a reduction from $\SIS(m',Q,\beta)$ to  $\kSUMG{2k}{\Z_q}{m}$ (provided that $m' \gg q^{1/k} k^{4r/k} m^{4/k}$, which holds in this case).
  The reduction produces a SIS solution with $\ell_1$ norm bounded by $k^{2r} \leq \beta$.

  The running time of the resulting algorithm is $$rm'(m \cdot \poly(k,\log q) + 10T_{\mathsf{kSUM}}) \approx 2^{O(\eps n/\eps' + \log n)} \cdot T_{\mathsf{kSUM}}~.$$

  \item
  Finally, by Lemma~\ref{lem:mod-to-z}, we know that modular $2k$-SUM over $\Z_q$ can be reduced to $k$-SUM over the integers in the interval $[-u,u]$ for $u \approx q$ with essentially no overhead.
  \end{itemize}
  This finishes the proof.
\end{proof}

The following lemma shows our main reduction from SIS to $k$-SUM. In particular, taking $k = t$, $m' \gg (m^4q)^{1/k} \cdot (10k)^{4r}$ (so that $\delta$ is small), and $m \gg q^{1/k}$ (so that $\kSUMG{k}{\Z_q}{m}$ is total) gives a roughly $rmm'$-time reduction from SIS over $\Z_{q^r}$ to $k$-SUM over $\Z_q$ with high success probability.

\begin{lemma}
\label{lem:main}
Let $m,m', k,r,t$ be positive integers and $q > (tk)^r$ a prime, and let $Q = q^{r}$. If there is an algorithm that solves (average-case) $\kSUMG{k}{\Z_q}{m}$ in time $T$ with success probability $p$, then there is an algorithm that solves $\SIS(m',Q,\beta)$ in time $r\cdot m' (m \cdot  \poly(k,t,\log q)+10T)/p$ with success probability at least $1 - \delta$ and produces a solution with $\ell_1$ norm $\beta \leq (tk)^{r}$, where
  \[
    \delta := \frac{100 rm (m')^2}{p} \cdot  \frac{q^{1/4}}{(m'/(10tk)^{2r+1})^{t/4}}
    \; .
  \]
\end{lemma}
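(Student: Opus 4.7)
The plan is to instantiate the BKW/Wagner template as a reduction, building up SIS solutions over $\Z_{q^r}$ one ``base-$q$ digit at a time'' using the $\kSUMG{k}{\Z_q}{m}$ oracle. Concretely, starting from an SIS instance $a_1,\ldots,a_{m'} \in \Z_Q$, I would iteratively construct lists $L_0,L_1,\ldots,L_r$ of $m'$ integers each, maintaining the invariant that every entry of $L_i$ is (a)~congruent to $0\pmod{q^i}$, and (b)~representable as a non-negative integer linear combination of $a_1,\ldots,a_{m'}$ whose $\ell_1$ norm is at most $(tk)^i$. Since $Q=q^r$, any entry of $L_r$ together with its coefficient vector is a valid $\SIS(m',Q,\beta)$ solution with $\beta\le(tk)^r$; the algorithm outputs such a coefficient vector.

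The transition from $L_i$ to $L_{i+1}$ is the heart of the reduction. To produce one new entry of $L_{i+1}$, sample $m$ independent uniform $t$-subsets $T_1,\ldots,T_m\subseteq[m']$, form $b_j:=\sum_{\ell\in T_j}L_i[\ell]$, and view $a'_j:=(b_j/q^i)\bmod q$ as an element of $\Z_q$ (well-defined since $b_j\equiv 0\pmod{q^i}$ by the invariant). By Corollary~\ref{cor:LHL_A}, applied to the ``$i$-th digits'' of the entries of $L_i$, the tuple $(a'_1,\ldots,a'_m)$ is statistically close to $U_{\Z_q^m}$ provided $\binom{m'}{t}\gg qm^4$. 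Feeding it to the $\kSUMG{k}{\Z_q}{m}$ oracle returns a $k$-subset $S$ with $\sum_{j\in S}a'_j=0\pmod q$, so $\sum_{j\in S}b_j$ is divisible by $q^{i+1}$; append this to $L_{i+1}$. Its coefficient vector in terms of $a_1,\ldots,a_{m'}$ is the sum of $k$ indicator vectors of $t$-subsets, which has non-negative integer entries summing to exactly $tk$ times the weight contributed by the previous level. Iterating $r$ times gives the claimed $\ell_1$ bound. Non-triviality is automatic: because all coefficients along the way are non-negative integers and each combining step fuses a non-empty collection of previous-level vectors (which are inductively non-zero), the output coefficient vector is non-zero.

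The main obstacle will be controlling error across the $rm'$ oracle invocations. Three things can fail: (i)~the LHL only gives closeness-to-uniform, (ii)~the oracle may fail even on uniform input (handled by standard $O(\log(rm')/p)$-fold repetition, absorbed in the $10T/p$ factor), and (iii)~subtle correlations arise because all $m'$ invocations at level~$i\to i+1$ reuse the same $L_i$, so the resulting $L_{i+1}$ need not look like independent uniform samples modulo $q^{i+2}$ for the next iteration's LHL to apply. The cleanest way to handle~(iii) is to use Corollary~\ref{cor:hitting_prob}: condition on the outcome of previous oracle calls, note that they expose a set $I$ of at most $v\le tk$ indices of $L_i$ per produced element (so at most $v'\cdot tk$ indices total across $v'$ already-chosen slots), and argue that a fresh invocation's $k$ chosen $t$-subsets avoid these ``exposed'' indices and also avoid pairwise intersection except with probability at most $O((tk)^{2}v'\cdot t/m')$ per call. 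This exponent $2r+1$ in the denominator of $\delta$ comes from repeatedly applying this bound across the $r$ levels, with the $v,v'$ growing geometrically. Collecting the three failure modes and union-bounding over all $rm'$ invocations yields exactly the stated $\delta$, and the running-time accounting of $r\cdot m'$ oracle calls (each of cost $10T/p$ after amplification) plus $O(m\cdot\poly(k,t,\log q))$ arithmetic per call gives the claimed bound.
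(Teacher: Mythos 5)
Your overall strategy is the paper's: re-interpret BKW as a reduction, re-randomize each level via random $t$-subsets and the leftover hash lemma (Corollary~\ref{cor:LHL_A}), feed blocks of $m$ re-randomized digits to the $\kSUMG{k}{\Z_q}{m}$ oracle, and enforce disjointness of the chosen subsets via the hitting-probability bound (Corollary~\ref{cor:hitting_prob}) so that the next level again consists of independent uniform elements. You correctly identify all three failure modes, including the subtle correlation issue (iii), which is indeed the crux.

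However, there is a genuine quantitative gap: you keep every list $L_i$ at size $m'$. Each entry of $L_{i+1}$ consumes $tk$ indices of $L_i$ (a union of $k$ pairwise-disjoint $t$-subsets), and for the independence invariant to hold these index sets must be disjoint \emph{across all entries of $L_{i+1}$}, not just within one oracle call --- otherwise distinct entries of $L_{i+1}$ share summands of $L_i$ and are correlated, so the LHL cannot be applied at the next level. Producing $m'$ such entries would require $tk\cdot m'$ pairwise-disjoint indices out of only $m'$ available, which is impossible for $tk>1$; correspondingly, your hitting-probability bound $O((tk)^2 v' t/m')$ with $v'\approx m'$ exceeds $1$ and is vacuous. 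The paper's proof avoids this by shrinking the lists geometrically, $m_{i+1} = \ceil{m'/(10t^2k^2)^{i}}$, so that the total number of ``exposed'' indices at level $i$ is at most $tk\,m_{i+1}\le m_i/(10tk)$ and the hitting probability stays below $1/2$; the exponent $2r+1$ in $\delta$ then comes from the size $m_r\approx m'/(10tk)^{2r}$ of the \emph{final} (smallest) list entering the binomial coefficient in the LHL and hitting bounds, not from geometrically growing $v,v'$ as you suggest. Since only one element of $L_r$ is needed for the SIS solution, shrinking costs nothing; with that modification (and the paper's accounting of $10\ceil{m_{i+1}/p}$ oracle calls per level in place of generic repetition), your argument goes through.
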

\begin{proof}
    At a high level, the idea is to run a variant of the Blum-Kalai-Wasserman~\cite{BKW03} algorithm where in each iteration, we call a $k$-SUM oracle. In particular, on input $a_1,\ldots,a_{m'}$, the algorithm operates as follows.

  \begin{itemize}
  \item
  In the beginning of the $i^{\mathsf{th}}$ iteration, the algorithm starts with a sequence of $$m_i := \ceil{m'/(10 t^2 k^2)^{i-1}}$$ numbers $a_{i,1},\ldots,a_{i,m_i}$. The invariant is that $a_{i,j} = 0 \pmod{q^{i-1}}$ for all $j$. It then generates disjoint $S_{i,1},\ldots, S_{i,m_{i+1}} \subseteq [m_i]$ such that $|S_{i,\ell}| \leq kt$ and $\sum_{j \in S_{i,\ell}} a_{i,j} =0 \pmod{q^i}$, in a way that we will describe below.

  As the base case, for $i=1$, $a_{1,j} = a_j$, the input itself, and the invariant is vacuous.
  \item In the $i^{\mathsf{th}}$ iteration, we apply the re-randomization lemma (Corollary~\ref{cor:LHL_A}), computing subsets of $t$ randomly chosen elements from $a_{i,1},\ldots, a_{i,m_i}$, to generate $m_i^* := 10m \ceil{m_{i+1}/p}$ numbers $c_{i,1},\ldots,c_{i,m_i^*}$.
  \item Let $d_{i,j} = c_{i,j}/q^{i-1} \pmod{q} \in \Z_q$. Note that this is well-defined because each $c_{i,j} = 0 \pmod{q^{i-1}}$.
  \item Divide the $d_{i,j}$ into $10\ceil{m_{i+1}/p}$ disjoint blocks of $m$ elements each, set $\ell = 1$. For each block, feed the block to the $k$-SUM algorithm to obtain $d_{i,j_1},\ldots, d_{i,j_k}$.  This yields corresponding subsets $S_{1}^*,\ldots, S_{k}^* \in \binom{[m_i]}{t}$ such that $\sum_{j \in S_x^*} a_{i,j}/q^{i-1} = d_{i,j_x} \pmod{q}$. If $d_{i,j_1} + \cdots + d_{i,j_k} = 0 \pmod{q}$ and the sets $S_1^*,\ldots S_{k}^*,S_{i,1},\ldots, S_{i,\ell-1}$ are pairwise disjoint, then set $S_{i,\ell} := \bigcup S_x^*$ and increment $\ell$.
  \item If $\ell \leq m_{i+1}$, the algorithm fails. Otherwise, take $a_{i+1,\ell} := \sum_{j \in S_{i,\ell}} a_{i,j}$ for $\ell = 1,\ldots, m_{i+1}$.
  \item At the end of the $r^{th}$ iteration we obtain a $(kt)^{r}$-subset of the $a_1,\ldots,a_{m'}$ that sums to $0 \pmod{Q}$.
  \end{itemize}
  We now analyze the correctness, run-time and the quality of output of this reduction.

  The reduction calls the $k$-SUM oracle $\sum m_i^*/m \leq 20rm'/p$ times. The rest of the operations take $r mm' \poly(k,t,\log q)/p$ time for a total of $r\cdot m'( m \poly(k,t,\log q)+10T)/p$ time, as claimed. Furthermore, the $\ell_1$ norm of the solution is $\beta \leq (tk)^r$, as claimed.

  Finally, we show that the algorithm succeeds with the claimed probability. Since the sets $S_{i,\ell}$ are disjoint and do not depend on $a_{i,j} - (a_{i,j} \bmod q^i)$, it follows from a simple induction argument that at each step the $a_{i,j}$ are uniformly random and independent elements from $q^{i-1}\Z/q^r\Z$.
  Therefore, by Corollary~\ref{cor:LHL_A}, the statistical distance of the collection of all $d_{i,j}$ (for a given $i$) from uniformly random variables that are independent of the $a_{i,j}$ is $\delta_i \leq (m_i^*+1)\cdot (\frac{qt^t}{m_i^t})^{1/4}$. In total, the statistical distance of all samples from uniform is then at most $\sum \delta_i < \delta/3$ for our choice of parameters. So, up to statistical distance $\delta/3$, we can treat the $d_{i,j}$ as uniformly random and independent elements.

  It remains to show that, assuming that the $d_{i,j}$ are uniformly random and independent, then we will find \emph{disjoint} sets $S_{i,1},\ldots, S_{i,m_{i+1}}$ with $\sum_{j \in S_{i,\ell}} d_{i,j} = 0 \pmod{q}$ at each step except with probability at most $2\delta/3$.
  Let $\vec{b}_{i} := (a_{i,1}/q^{i-1} \bmod q,\ldots, a_{i,m_i}/q^{i-1} \bmod q)$. By Corollary~\ref{cor:hitting_prob}, we have
  \[
    p_{\vec{b}_i, \vec{d}_i, I,J,t} \leq 10t^2 k^2 m_{i+1}/m_i \leq 1/2
    \]
    for all $I \in \binom{[m_i]}{\leq v}$ and $J \in \binom{[m_i^*]}{\leq v}$ except with probability at most $10 m_i m_i^* q^{1/4}/\binom{m_i-1}{t-1}^{1/4} < \delta/3$ for $v := tk m_{i+1} \geq |T|$, $v' := k$, and $\eps := 1/2$.

  So, suppose this holds.
  Notice that $\Pr[d_{i,j_1} + \cdots + d_{i,j_k} = 0 \pmod{q}] = p$ by definition. And, conditioned on $d_{i,j_1},\ldots, d_{i,j_k}$, the $S_{x}^*$ are independent and uniformly random subject to the constraint that $\sum_{j \in S_x^*} a_{i,j}/q^{i-1} = d_{i,j_x} \pmod{q}$. Therefore, the probability that $S_1^*,\ldots, S_k^*, I := S_{i,1} \cup \cdots \cup S_{i,\ell-1}$ are pairwise disjoint in this case is exactly
  \[
    p_{\vec{b}_i, \vec{d}_i, I,J,t} \leq 1/2
    \; ,
  \]
  where $J := \{j_1,\ldots, j_k\}$. So, each time we call the oracle, we increment $\ell$ with probability at least $p/2$. It follows from the Chernoff-Hoeffding bound that we increment $\ell$ at least $m_{i+1}$ times except with probability at most $e^{-m_{i+1}/100}\ll \delta/3$.

  Putting everything together, we see that the algorithm fails with probability at most $\delta$, as claimed.
\end{proof}

\newcommand{\etalchar}[1]{$^{#1}$}

\appendix

\section{Total \texorpdfstring{$k$}{k}-SUM and Computational Geometry} 
\label{apx:geometric}

Here, we show that one of the main results in~\cite{GOClassProblems95,GOClassProblems12} can be extended meaningfully to our setting, i.e., to the case of search $k$-sum over $\Z_Q$ with $\binom{m}{k} \gg Q$. (In~\cite{GOClassProblems95,GOClassProblems12}, Gajentaan and Overmars only considered decisional $3$-SUM.) Specifically, we will reduce the following problem to $k$-SUM in this regime.

\begin{definition}
For $d \geq 1$ and $Q,m \geq 2$ with $Q$ prime, the \emph{$(Q,m,d)$-Plane problem} is the following search problem. The input is $\vec{a}_1,\ldots, \vec{a}_m \in \Z_{Q}^{d+1}$. The goal is to find distinct $\vec{a}_{i_1},\ldots, \vec{a}_{i_{d+2}}$ that lie in a $d$-dimensional affine hyperplane over the field $\Z_{Q}$. (In other words, $\vec{a}_{i_{d+1}} - \vec{a}_{i_{d+2}}$ can be written as a linear combination of $ \vec{a}_{i_1} - \vec{a}_{i_{d+2}},\ldots, \vec{a}_{i_{d}} - \vec{a}_{i_{d+2}}$ over $\Z_Q$.)
\end{definition}

\begin{lemma}
    For $d \geq 1$ and $Q,m \geq 2$ with $Q$ prime, there is a reduction from $\kSUMG{(d+2)}{\Z_Q}{m}$ to $(Q,m,d)$-Plane.
\end{lemma}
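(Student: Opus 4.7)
The plan is to embed each $\Z_Q$-valued $k$-SUM element onto a ``moment-like'' curve in $\Z_Q^{d+1}$ so that coplanarity of $d+2$ points exactly captures the $(d+2)$-SUM relation. Concretely, given an input $a_1,\dots,a_m\in\Z_Q$ to $\kSUMG{(d+2)}{\Z_Q}{m}$, I form
\[
\vec{a}_i \;:=\; (a_i,\, a_i^2,\, a_i^3,\, \dots,\, a_i^d,\, a_i^{d+2}) \;\in\; \Z_Q^{d+1},
\]
deliberately skipping the exponent $d+1$, and feed $\vec{a}_1,\dots,\vec{a}_m$ to the $(Q,m,d)$-Plane oracle.

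Suppose the oracle returns distinct $\vec{a}_{i_1},\dots,\vec{a}_{i_{d+2}}$ lying in a common $d$-dimensional affine hyperplane. By the usual determinantal criterion for affine dependence, this is equivalent to $\det M = 0$, where
\[
M := \begin{pmatrix} 1 & a_{i_1} & a_{i_1}^2 & \cdots & a_{i_1}^d & a_{i_1}^{d+2} \\ \vdots & \vdots & \vdots & & \vdots & \vdots \\ 1 & a_{i_{d+2}} & a_{i_{d+2}}^2 & \cdots & a_{i_{d+2}}^d & a_{i_{d+2}}^{d+2} \end{pmatrix}.
\]
The key algebraic identity I will establish is the generalized-Vandermonde factorization
\[
\det M \;=\; \Bigl(\sum_{j=1}^{d+2} a_{i_j}\Bigr)\cdot \prod_{1\le j<l\le d+2}(a_{i_l}-a_{i_j}).
\]

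Given this identity the reduction concludes easily: because the $\vec{a}_{i_j}$ are distinct and $a_i$ is already the first coordinate of $\vec{a}_i$, the scalars $a_{i_j}$ are pairwise distinct; since $Q$ is prime and $\Z_Q$ is a field, the Vandermonde product is a nonzero element, so $\sum_{j=1}^{d+2} a_{i_j}\equiv 0\pmod Q$, and I output these $a_{i_j}$ as the $k$-SUM solution. The only genuine work is the determinant identity, which is the expected obstacle. I would argue that $\det M$, viewed as a polynomial in the $a_{i_j}$, vanishes whenever any two coincide and is therefore divisible by the Vandermonde $\prod_{j<l}(a_{i_l}-a_{i_j})$; a degree count shows the quotient is homogeneous of degree one and symmetric in the $a_{i_j}$, so it must equal $c\cdot\sum_j a_{i_j}$; then expanding $\det M$ along the last column and matching the coefficient of $a_{i_{d+2}}^{d+2}$ on both sides pins down $c=1$. (Equivalently, this is the Jacobi--Trudi identity for the Schur polynomial $s_{(1)}=h_1$ applied to $d+2$ variables.)
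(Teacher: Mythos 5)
Your reduction is the same as the paper's: the identical curve embedding $a \mapsto (a, a^2, \ldots, a^d, a^{d+2})$ skipping exponent $d+1$, the determinantal coplanarity criterion (your bordered $(d+2)\times(d+2)$ matrix with a column of ones is equivalent, after row subtraction, to the paper's $(d+1)\times(d+1)$ difference matrix), and the same key factorization $\det M = \pm\bigl(\sum_j a_{i_j}\bigr)\prod_{j<l}(a_{i_l}-a_{i_j})$, from which the conclusion follows exactly as in the paper since distinct vectors force distinct scalars and hence a nonzero Vandermonde factor over the field $\Z_Q$. The only divergence is how the identity is proved: the paper normalizes $b_{d+2}=0$, recognizes a column-scaled Vandermonde matrix, and extracts the remaining factor via Cramer's rule applied to the interpolating polynomial, whereas you use the standard alternating-polynomial argument (divisibility by the Vandermonde, degree count, symmetry of the quotient, coefficient matching) --- equivalently the bialternant formula for $s_{(1)}=e_1$. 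Your route is arguably cleaner: it is a single self-contained polynomial-identity argument over $\Z$ (which then reduces mod $Q$) and avoids the paper's slightly delicate normalization step. Both are correct.
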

\begin{proof}
    Let $f_d : \Z_Q \to \Z_{Q}^{d+1}$ be the map $f_d({a}) := ({a},{a}^2,{a}^3,\ldots, {a}^{d},{a}^{d+2})$. 
    E.g., $f_1({a}) = ({a},{a}^3)$, $f_2({a}) = ({a},{a}^2,{a}^4)$, etc. On input ${a}_1,\ldots, {a}_m \in \Z_Q$, the reduction simply calls its $(Q,m,d)$-Plane oracle on $f_d({a}_1),\ldots, f_d({a}_m) \in \F_{Q}^{d+1}$, receiving as output distinct indices $i_1,\ldots, i_{d+2}$ such that $f_{d}({a}_{i_1}),\ldots, f_d({a}_{i_{d+2}})$ lie in a $d$-dimensional affine hyperplane (assuming that such indices exist). The reduction simply outputs these indices, i.e., it claims that ${a}_{i_1} + \cdots + {a}_{i_{d+2}} = 0 \bmod q$.

    Notice that $d+2$ points $\vec{b}_{1},\ldots, \vec{b}_{d+2} \in \Z_{Q}^{d+1}$ lie in a $d$-dimensional affine hyperplane if and only if the matrix
    $(
        \vec{b}_1 - \vec{b}_{d+2}, \vec{b}_2 - \vec{b}_{d+2}, \ldots, \vec{b}_{d+1} - \vec{b}_{d+2})\in \Z_{Q}^{(d+1) \times (d+1)}
    $
    has determinant zero. (Here, we have used the fact that $\Z_{Q}$ is a field.) So, we consider the matrix
    \[
        \vec{M} := \vec{M}(b_1,\ldots, b_{d+2}) := \begin{pmatrix}
            b_1 - b_{d+2} & b_2 - b_{d+2} & \cdots &b_{d+1} - b_{d+2}\\
            b_1^2 - b_{d+2}^2 &b_2^2 - b_{d+2}^2 &\cdots &b_{d+1}^2 - b_{d+2}^2\\
            \vdots & \vdots &\ddots &\vdots\\
            b_1^{d} - b_{d+2}^{d} & b_2^{d} - b_{d+2}^{d} &\cdots & b_{d+1}^{d} - b_{d+2}^{d}\\
            b_1^{d+2} - b_{d+2}^{d+2} & b_2^{d+2} - b_{d+2}^{d+2} &\cdots & b_{d+1}^{d+2} - b_{d+2}^{d+2}
        \end{pmatrix}
        \in \F_{Q}^{(d+1) \times (d+1)}
        \; .
    \]
    We claim that
    \[
    \det(\vec{M}) = (-1)^d(b_1 + \cdots + b_{d+2}) \cdot \prod_{i < j} (b_j - b_i)
    \; ,
    \]
    The result then follows, since this is zero if and only if $b_i = b_j$ for some $i \neq j$ or $b_1 + \cdots + b_{d+2} = 0$. Since by definition the $(Q,m,d)$-Plane oracle only outputs distinct vectors on a hyperplane, this means that its output must correspond to distinct elements with ${a}_{i_1} + \cdots + {a}_{i_{d+2}} = 0 \bmod Q$.

    To prove that the determinant has the appropriate form, we first notice that without loss of generality we may take $b_{d+2} = 0$. Next, we define
    \[
        \vec{M}' := \vec{M}'(b_1,\ldots, b_{d+1}) := \begin{pmatrix}
            b_1 & b_2 & \cdots &b_{d+1}\\
            b_1^2 &b_2^2 &\cdots &b_{d+1}^2\\
            \vdots & \vdots &\ddots &\vdots\\
            b_1^{d} & b_2^{d} &\cdots & b_{d+1}^{d}\\
            b_1^{d+1} & b_2^{d+1} &\cdots & b_{d+1}^{d+1}
            \; .
        \end{pmatrix} \in \F_{Q}^{(d+1) \times (d+1)}
    \]
    This is just a Vandermonde matrix with columns scaled up by $b_i$. So, its determinant is a scaling of the Vandermonde determinant,
    \[
        \det(\vec{M}') =  b_1 \cdots b_{d+1} \cdot \prod_{i < j} (b_j - b_i)
        \; .
    \]

    Finally, we recall Cramer's rule, which in particular tells us that
    \[
        \det(\vec{M}) = p_{d+1} \det(\vec{M}')
    \]
    for the unique $\vec{p} := (p_1,\ldots, p_{d+1}) \in \Z_Q^{d+1}$ satisfying $\vec{p}^T\vec{M}' = (b_1^{d+2},\ldots, b_{d+1}^{d+2})$. I.e., the coordinates of $\vec{p}$ form the polynomial $p(x) := p_1 + p_2 x + \cdots + p_{d+1}x^d$ such that $p(b_i) = b_i^{d+1}$. The result follows by noting that $p_i = (-1)^{i-1} \sum_{S \in \binom{[d+1]}{d+2-i}} \prod_{j \in S} b_j$. In particular, $p_{d+1} = (-1)^d(b_1 + \cdots + b_{d+1})$, as needed.
\end{proof}


\begin{thebibliography}{ADTS16}
	
	\bibitem[ABHS19]{ABHS19}
	Amir Abboud, Karl Bringmann, Danny Hermelin, and Dvir Shabtay.
	\newblock Seth-based lower bounds for subset sum and bicriteria path.
	\newblock In Timothy~M. Chan, editor, {\em Proceedings of the Thirtieth Annual
		{ACM-SIAM} Symposium on Discrete Algorithms, {SODA} 2019, San Diego,
		California, USA, January 6-9, 2019}, pages 41--57. {SIAM}, 2019.
	
	\bibitem[ABJ{\etalchar{+}}19]{ABJTW19}
	Kyriakos Axiotis, Arturs Backurs, Ce~Jin, Christos Tzamos, and Hongxun Wu.
	\newblock Fast modular subset sum using linear sketching.
	\newblock In Timothy~M. Chan, editor, {\em Proceedings of the Thirtieth Annual
		{ACM-SIAM} Symposium on Discrete Algorithms, {SODA} 2019, San Diego,
		California, USA, January 6-9, 2019}, pages 58--69. {SIAM}, 2019.
	
	\bibitem[AC19]{ACNoteConcrete19}
	Divesh Aggarwal and Eldon Chung.
	\newblock A note on the concrete hardness of the {{Shortest Independent Vectors
			Problem}} in lattices.
	\newblock 2019.
	
	\bibitem[ADTS16]{ADTSPostquantumKey16}
	Erdem Alkim, L{\'e}o Ducas, {Thomas P\"oppelmann}, and Peter Schwabe.
	\newblock Post-quantum key exchange \textemdash{} {{A}} new hope.
	\newblock In {\em {{USENIX Security Symposium}}}, 2016.
	
	\bibitem[Ajt96]{AjtGeneratingHard96}
	Mikl{\'o}s Ajtai.
	\newblock Generating hard instances of lattice problems.
	\newblock In {\em {{STOC}}}, 1996.
	
	\bibitem[ALNS20]{ALNSSlideReduction20}
	Divesh Aggarwal, Jianwei Li, Phong~Q. Nguyen, and Noah {Stephens-Davidowitz}.
	\newblock Slide reduction, revisited---{{Filling}} the gaps in {{SVP}}
	approximation.
	\newblock In {\em {{CRYPTO}}}, 2020.
	\newblock \url{https://arxiv.org/abs/1908.03724}.
	
	\bibitem[APS15]{APSConcreteHardness15}
	Martin~R. Albrecht, Rachel Player, and Sam Scott.
	\newblock On the concrete hardness of {{Learning}} with {{Errors}}.
	\newblock {\em J. Mathematical Cryptology}, 9(3), 2015.
	\newblock \url{http://eprint.iacr.org/2015/046}.
	
	\bibitem[AW14]{AV14}
	Amir Abboud and Virginia~Vassilevska Williams.
	\newblock Popular conjectures imply strong lower bounds for dynamic problems.
	\newblock In {\em FOCS}, 2014.
	
	\bibitem[BBB19]{boix-adseraAverageCaseComplexityCounting2019}
	E.~{Boix-Adser{\`a}}, M.~Brennan, and G.~Bresler.
	\newblock The average-case complexity of counting cliques in
	{{Erd\H{o}s}}-{{R\'enyi}} hypergraphs.
	\newblock In {\em FOCS}, 2019.
	
	\bibitem[BCJ11]{BCJ11}
	Anja Becker, Jean{-}S{\'{e}}bastien Coron, and Antoine Joux.
	\newblock Improved generic algorithms for hard knapsacks.
	\newblock In Kenneth~G. Paterson, editor, {\em Advances in Cryptology -
		{EUROCRYPT} 2011 - 30th Annual International Conference on the Theory and
		Applications of Cryptographic Techniques, Tallinn, Estonia, May 15-19, 2011.
		Proceedings}, volume 6632 of {\em Lecture Notes in Computer Science}, pages
	364--385. Springer, 2011.
	
	\bibitem[Bel56]{Bellman}
	Richard Bellman.
	\newblock Notes on the theory of dynamic programming iv - maximization over
	discrete sets.
	\newblock {\em Naval Research Logistics Quarterly}, 3:67--70, 1956.
	
	\bibitem[BKW03]{BKW03}
	Avrim Blum, Adam Kalai, and Hal Wasserman.
	\newblock Noise-tolerant learning, the parity problem, and the statistical
	query model.
	\newblock {\em J. {ACM}}, 50(4):506--519, 2003.
	
	\bibitem[Bri17]{Bringmann17}
	Karl Bringmann.
	\newblock A near-linear pseudopolynomial time algorithm for subset sum.
	\newblock In Philip~N. Klein, editor, {\em Proceedings of the Twenty-Eighth
		Annual {ACM-SIAM} Symposium on Discrete Algorithms, {SODA} 2017, Barcelona,
		Spain, Hotel Porta Fira, January 16-19}, pages 1073--1084. {SIAM}, 2017.
	
	\bibitem[BRSV17]{BallRSV17}
	Marshall Ball, Alon Rosen, Manuel Sabin, and Prashant~Nalini Vasudevan.
	\newblock Average-case fine-grained hardness.
	\newblock In Hamed Hatami, Pierre McKenzie, and Valerie King, editors, {\em
		Proceedings of the 49th Annual {ACM} {SIGACT} Symposium on Theory of
		Computing, {STOC} 2017, Montreal, QC, Canada, June 19-23, 2017}, pages
	483--496. {ACM}, 2017.
	
	\bibitem[DLW20]{DLW20}
	Mina Dalirrooyfard, Andrea Lincoln, and V.~Vassilevska Williams.
	\newblock New techniques for proving fine-grained average-case hardness.
	\newblock In {\em FOCS}, 2020.
	
	\bibitem[GINX16]{GINX16}
	Nicolas Gama, Malika Izabach{\`{e}}ne, Phong~Q. Nguyen, and Xiang Xie.
	\newblock Structural lattice reduction: Generalized worst-case to average-case
	reductions and homomorphic cryptosystems.
	\newblock In Marc Fischlin and Jean{-}S{\'{e}}bastien Coron, editors, {\em
		Advances in Cryptology - {EUROCRYPT} 2016 - 35th Annual International
		Conference on the Theory and Applications of Cryptographic Techniques,
		Vienna, Austria, May 8-12, 2016, Proceedings, Part {II}}, volume 9666 of {\em
		Lecture Notes in Computer Science}, pages 528--558. Springer, 2016.
	
	\bibitem[GO95]{GOClassProblems95}
	Anka Gajentaan and Mark~H Overmars.
	\newblock On a class of {$O(n^2)$} problems in computational geometry.
	\newblock {\em Computational Geometry}, 5(3), 1995.
	\newblock
	\url{http://www.sciencedirect.com/science/article/pii/0925772195000222}.
	
	\bibitem[GO12]{GOClassProblems12}
	Anka Gajentaan and Mark~H. Overmars.
	\newblock On a class of {$O(n^2)$} problems in computational geometry.
	\newblock {\em Computational Geometry}, 45(4), 2012.
	\newblock
	\url{http://www.sciencedirect.com/science/article/pii/S0925772111000927}.
	
	\bibitem[GPV08]{GPVTrapdoorsHard08}
	Craig Gentry, Chris Peikert, and Vinod Vaikuntanathan.
	\newblock Trapdoors for hard lattices and new cryptographic constructions.
	\newblock In {\em {{STOC}}}, 2008.
	\newblock \url{https://eprint.iacr.org/2007/432}.
	
	\bibitem[GR18]{GoldreichR18}
	Oded Goldreich and Guy~N. Rothblum.
	\newblock Counting t-cliques: Worst-case to average-case reductions and direct
	interactive proof systems.
	\newblock In Mikkel Thorup, editor, {\em 59th {IEEE} Annual Symposium on
		Foundations of Computer Science, {FOCS} 2018, Paris, France, October 7-9,
		2018}, pages 77--88. {IEEE} Computer Society, 2018.
	
	\bibitem[GR20]{GR20}
	Oded Goldreich and Guy~N. Rothblum.
	\newblock Worst-case to average-case reductions for subclasses of {P}.
	\newblock In Oded Goldreich, editor, {\em Computational Complexity and Property
		Testing - On the Interplay Between Randomness and Computation}, volume 12050
	of {\em Lecture Notes in Computer Science}, pages 249--295. Springer, 2020.
	
	\bibitem[HS74]{HS74}
	Ellis Horowitz and Sartaj Sahni.
	\newblock Computing partitions with applications to the knapsack problem.
	\newblock {\em J. {ACM}}, 21(2):277--292, 1974.
	
	\bibitem[HS20]{hiraharaNearlyOptimalAverageCase2020}
	Shuichi Hirahara and Nobutaka Shimizu.
	\newblock Nearly optimal average-case complexity of counting bicliques under
	{{SETH}}.
	\newblock 2020.
	
	\bibitem[ILL89]{ILL89}
	Russell Impagliazzo, Leonid~A. Levin, and Michael Luby.
	\newblock Pseudo-random generation from one-way functions (extended abstracts).
	\newblock In David~S. Johnson, editor, {\em Proceedings of the 21st Annual
		{ACM} Symposium on Theory of Computing, May 14-17, 1989, Seattle, Washigton,
		{USA}}, pages 12--24. {ACM}, 1989.
	
	\bibitem[JW19]{JW19}
	Ce~Jin and Hongxun Wu.
	\newblock A simple near-linear pseudopolynomial time randomized algorithm for
	subset sum.
	\newblock In Jeremy~T. Fineman and Michael Mitzenmacher, editors, {\em 2nd
		Symposium on Simplicity in Algorithms, SOSA@SODA 2019, January 8-9, 2019 -
		San Diego, CA, {USA}}, volume~69 of {\em {OASICS}}, pages 17:1--17:6. Schloss
	Dagstuhl - Leibniz-Zentrum f{\"{u}}r Informatik, 2019.
	
	\bibitem[KPP16]{KPP16}
	Tsvi Kopelowitz, Seth Pettie, and Ely Porat.
	\newblock Higher lower bounds from the 3{SUM} conjecture.
	\newblock In {\em SODA}, 2016.
	
	\bibitem[KX17]{KX17}
	Konstantinos Koiliaris and Chao Xu.
	\newblock A faster pseudopolynomial time algorithm for subset sum.
	\newblock In Philip~N. Klein, editor, {\em Proceedings of the Twenty-Eighth
		Annual {ACM-SIAM} Symposium on Discrete Algorithms, {SODA} 2017, Barcelona,
		Spain, Hotel Porta Fira, January 16-19}, pages 1062--1072. {SIAM}, 2017.
	
	\bibitem[MP13]{MPHardnessSIS13}
	Daniele Micciancio and Chris Peikert.
	\newblock Hardness of {SIS} and {LWE} with small parameters.
	\newblock In {\em {{CRYPTO}}}, 2013.
	
	\bibitem[MR07]{MRWorstcaseAveragecase07}
	Daniele Micciancio and Oded Regev.
	\newblock Worst-case to average-case reductions based on {Gaussian} measures.
	\newblock {\em SIAM Journal of Computing}, 37(1), 2007.
	
	\bibitem[NIS]{NIST}
	NIST.
	\newblock Post-quantum cryptography standardization.
	\newblock \url{https://csrc.nist.gov/Projects/Post-Quantum-Cryptography}.
	
	\bibitem[Pat10]{P10}
	Mihai Patrascu.
	\newblock Towards polynomial lower bounds for dynamic problems.
	\newblock In {\em STOC}, 2010.
	
	\bibitem[Pei16]{PeiDecadeLattice16}
	Chris Peikert.
	\newblock A decade of lattice cryptography.
	\newblock {\em Foundations and Trends in Theoretical Computer Science}, 10(4),
	2016.
	
	\bibitem[PW10]{PW10}
	Mihai Patrascu and Ryan Williams.
	\newblock On the possibility of faster {SAT} algorithms.
	\newblock In {\em SODA}, 2010.
	
	\bibitem[Wag02]{Wagner02}
	David~A. Wagner.
	\newblock A generalized birthday problem.
	\newblock In Moti Yung, editor, {\em Advances in Cryptology - {CRYPTO} 2002,
		22nd Annual International Cryptology Conference, Santa Barbara, California,
		USA, August 18-22, 2002, Proceedings}, volume 2442 of {\em Lecture Notes in
		Computer Science}, pages 288--303. Springer, 2002.
	
	\bibitem[Wil18]{WilFinegrainedQuestions18}
	Virginia~Vassilevska Williams.
	\newblock On some fine-grained questions in algorithms and complexity.
	\newblock In {\em Proceedings of the {{International Congress}} of
		{{Mathematicians}} ({{ICM}} 2018)}. 2018.
	
	\bibitem[WW13]{VW13}
	Virginia~Vassilevska Williams and Ryan Williams.
	\newblock Finding, minimizing, and counting weighted subgraphs.
	\newblock {\em {SIAM} J. Comput.}, 42(3):831--854, 2013.
	
\end{thebibliography}
\end{document}